\providecommand{\Ntruck}{N_{\textrm{truck}}}
\providecommand{\Ttruck}{T_{\textrm{truck}}}
\providecommand{\Tprice}{T_{\textrm{price}}}
\providecommand{\Tutil}{T_{\textrm{util}}}
\providecommand{\Thist}{T_{\textrm{hist}}}
\newtheorem{theorem}{Theorem}
\begin{document}

\title{Dynamic vehicle redistribution and online price incentives in shared mobility systems}

\author{Julius Pfrommer \and Joseph Warrington \and Georg Schildbach \and Manfred Morari}

\date{\today}

\maketitle

\begin{abstract}
  This paper considers a combination of intelligent repositioning decisions and
  dynamic pricing for the improved operation of shared mobility systems. The
  approach is applied to London's Barclays Cycle Hire scheme, which the authors
  have simulated based on historical data. Using model-based predictive control
  principles, dynamically varying rewards are computed and offered to customers
  carrying out journeys. The aim is to encourage them to park bicycles at nearby
  under-used stations, thereby reducing the expected cost of repositioning them
  using dedicated staff. In parallel, the routes that repositioning staff should
  take are periodically recomputed using a model-based heuristic. It is shown
  that it is possible to trade off reward payouts to customers against the cost
  of hiring staff to reposition bicycles, in order to minimize operating costs
  for a given desired service level.
\end{abstract}

\section{Introduction}
\label{sec:intro}

Public Bicycle Sharing (PBS) schemes offer the rental of bicycles as a means of
public transportation in urban areas. They allow registered users to pick up a
bicycle from one of many docking stations throughout the entire city, without
any prior notice. The bicycle is returned to another station, after which the
user's intended destination is usually reached on foot. Short journeys are
encouraged by charging users only a small fee for a short rental period
(typically less than one hour), but then ramping up the cost significantly for
longer use.

Such schemes have been introduced in major cities as an alternative to often
slow and crowded mass transportation. Many have grown considerably in size in
recent years \cite{shaheen_bikesharing_2010, demaio_bike-sharing:_2009}, and are
becoming a major component of their cities' public transportation systems. In
2008, for example, 120{,}000 daily journeys were being made using shared
bicycles in Paris \cite{erlanger2008}.

Most PBS schemes are still unable to recoup their full operational and
investment costs solely from customer fees. According to
\cite{midgley2011bicycle}, capital costs can be up to \$4,500 per bicycle, and
annual operational costs up to \$1,700 per bicycle. Sometimes additional
revenues from advertising can be used to mitigate this cost gap. However, in
almost all cases additional funding from public sources is required
\cite{shaheen_bikesharing_2010,wang2010bike}.

One of the major contributors to operational costs is the need to operate
staffed trucks for manual relocation of bicycles, in order to balance the
difference between supply and demand at various stations. If this effort were
not made, the arrival and departure of customers would cause many stations to
run full or empty, and the customer service rate would drop below acceptable
levels \cite{vogel_modeling_2010,obis_optimising_2011}. Since this
redistribution of bicycles entails costs, a trade-off for the desired service
level needs to be made.

The goal of this paper is to show how the system performance could be optimized
by trading off two complementary methods. Firstly, we devise an algorithm to
optimize the dynamic route-planning of multiple trucks for bicycle relocation.
Secondly, on top of this manual repositioning, we propose a scheme that offers
users price incentives based on the current and predicted state of the system, in
order to encourage them to change the endpoint of their journeys. These
incentives are set to shift bicycle drop-offs away from stations that are
overfilled, and towards nearby stations that may have empty spaces. The price
incentives are independent of the usual rental fees, which we assume to be a
sunk cost for the user.

This paper's main contributions can be summarized as follows:
\begin{enumerate}
\item A tailored routing algorithm that plans how trucks will be used to
  redistribute bikes amongst stations. Redistribution is performed in the
  dynamic setting, i.e. while the system is in operation. The heuristic
  chooses the actions of multiple trucks, with the aim of enabling as many extra
  journeys as possible to take place.
\item A dynamic incentives scheme where customers are encouraged to change their
  target station in exchange for a payment. Changes to journey length may be
  inconvenient, and we assume customers accept or reject such incentives based
  on the value of their time and the payment offered.
\end{enumerate}
Both the truck routes and the price incentives are recomputed online at periodic
time instances. For both components, a predictive model of the expected
near-future evolution of the system state is used to optimize their actions over
a finite, receding horizon. The optimization goal is to maximize the number of
additional journeys enabled via repositioning, taking into account available
resources and cost trade-offs. At each re-optimization step, up-to-date
information on the current state of the system is used to plan all future
operational decisions. This is shown schematically in Fig. \ref{fig:sysop}.

\begin{figure}
  \centering
  \includegraphics[width=.6\textwidth]{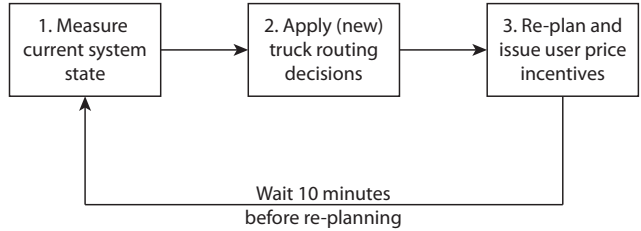}
  \caption{Schematic of the online optimization scheme presented in this paper.
    In step 2, truck routes are planned based on a model of how the bike
    movements will evolve from the current state, and if necessary new orders
    are issued. In step 3, the current system state and the future truck actions
    are taken into account, and new price incentives for users are computed,
    based on a trade-off between payouts and system performance.}
  \label{fig:sysop}
\end{figure}

We evaluate the trade-off between these two methods using a Monte Carlo model of
the London Cycle Hire PBS scheme, constructed from detailed historical usage
information. Our results suggest that service level improvements may be attained
using price incentives alone, and that increases in either the customer payouts
or the number of repositioning trucks deliver diminishing returns to service
levels. Unsurprisingly, higher service levels can be reached on weekends in
comparison to weekdays.

The paper is organized as follows. In Section~\ref{sec:model} we explain how a
model of the London PBS scheme was derived using historical data. In
Section~\ref{sec:utility} we develop a metric for the utility of repositioning
actions based on the expected ability to serve additional future customers. The
results are used in Section~\ref{sec:trucks} to develop a heuristic for
determining the routes of repositioning trucks. In Section~\ref{sec:pricing} we
develop a model-based controller for computing the price incentives offered to
customers. The two approaches for repositioning are compared using a Monte Carlo
simulation framework in Section~\ref{sec:simulation}.
Section~\ref{sec:conclusion} draws conclusions on the performance of the scheme.

\section{System model}
\label{sec:model}

\subsection{Historical data}
The PBS system model used in this paper is based on London's Barclays Cycle Hire
scheme. For modeling, we used three data sets made publicly available by the
Transport for London
authority:\footnote{\url{http://www.tfl.gov.uk/businessandpartners/syndication/default.aspx}}
\begin{enumerate}
\item 1{.}42 million rides spanning a period of 97 days (examples in
  Table~\ref{tab:ride-format}),
\item Size and location of 354 stations actively used during the recorded period
  (examples in Table~\ref{tab:stations-format}),
\item An initial station fill level recorded during nighttime when all bicycles
  were docked. In total, we estimate that the system contained 3708 bikes at
  the time for which historical data is available.
\end{enumerate}
\begin{table}
  \caption{Ride information samples}
  \label{tab:ride-format}
  \centering
  \begin{tabular}{c c c }
    \hline\noalign{\smallskip}
   bike-id & start (date, station-id) & end (date, station-id) \\
   \noalign{\smallskip}\hline\noalign{\smallskip}
    3340 & \{2010-07-30 06:00:00, 47\} & \{2010-07-30 06:22:00, 47\} \\
    3870 & \{2010-07-30 06:00:00, 234\} & \{2010-07-30 06:14:00, 203\} \\
    1627 & \{2010-07-30 06:01:00, 149\} & \{2010-07-30 06:29:00, 293\} \\
    1695 & \{2010-07-30 06:02:00, 152\} & \{2010-07-30 06:06:00, 324\} \\
    \noalign{\smallskip}\hline
  \end{tabular}
\end{table}
\begin{table}
  \caption{Station information samples}
  \label{tab:stations-format}
  \centering
  \begin{tabular}{c c c c c }
    \hline\noalign{\smallskip}
    id & name & position (lat, lon)& size\\
    \noalign{\smallskip}\hline\noalign{\smallskip}
    1 & River St, Clerkenwell & \{51.5291, -0.1099\} & 18\\
    2 & Phillimore Gardens, Kensington & \{51.4996, -0.1975\} & 34\\
    3 & Christopher St, Liverpool St & \{51.5212, -0.0846\} & 33\\
    4 & St. Chad's Street, King's Cross & \{51.5300, -0.1200\} & 22 \\
    \noalign{\smallskip}\hline
  \end{tabular}
\end{table}

Analysis of the historical journeys reveals regular daily flow patterns, with a
substantial difference between weekdays and weekends. Journeys are allowed
between 6am and midnight. As expected, many customers commute to the city center
in the morning and ride back to the outer districts in the late afternoon. This
pattern, absent on weekends, causes two spikes in daily rental activities that
are illustrated in Fig. \ref{fig:daily-rides}.
\begin{subfigures}
  \begin{figure}
    \begin{minipage}[h]{.475\linewidth}
      \centering
      \includegraphics[width=\textwidth]{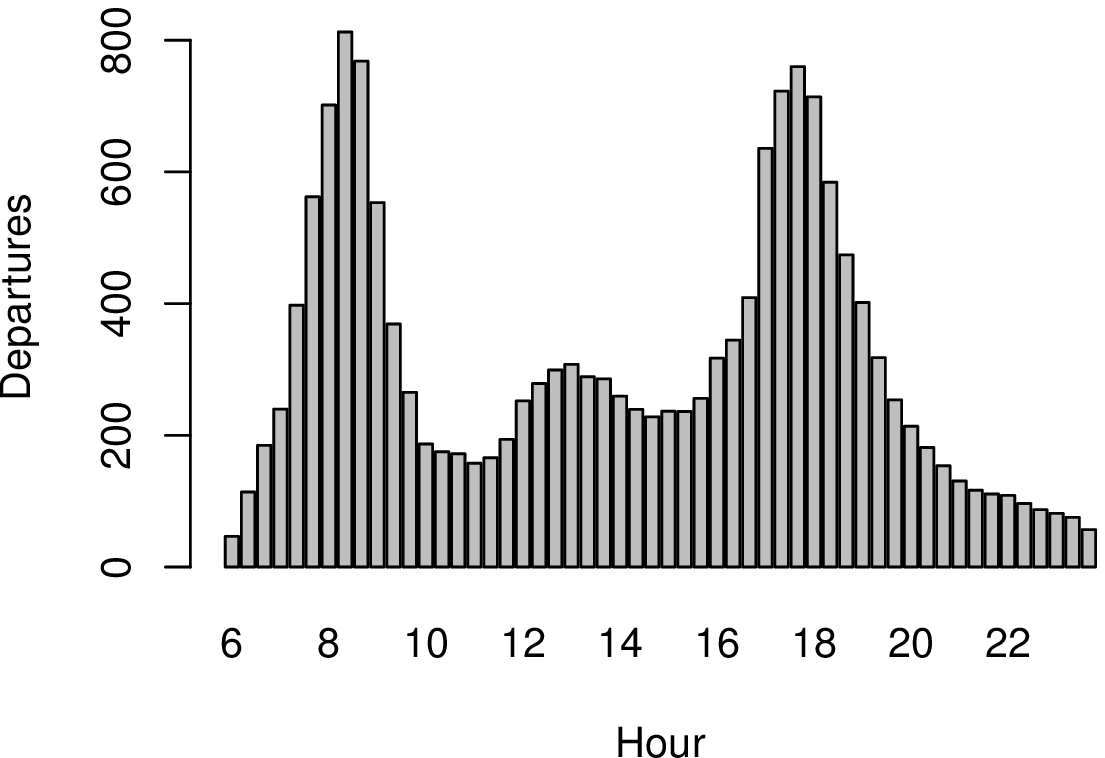}
      \caption{Average departure rates on weekdays}
      \label{fig:rides-workday}
    \end{minipage}
    \hfill
    \begin{minipage}[h]{.475\linewidth}
      \centering
      \includegraphics[width=\textwidth]{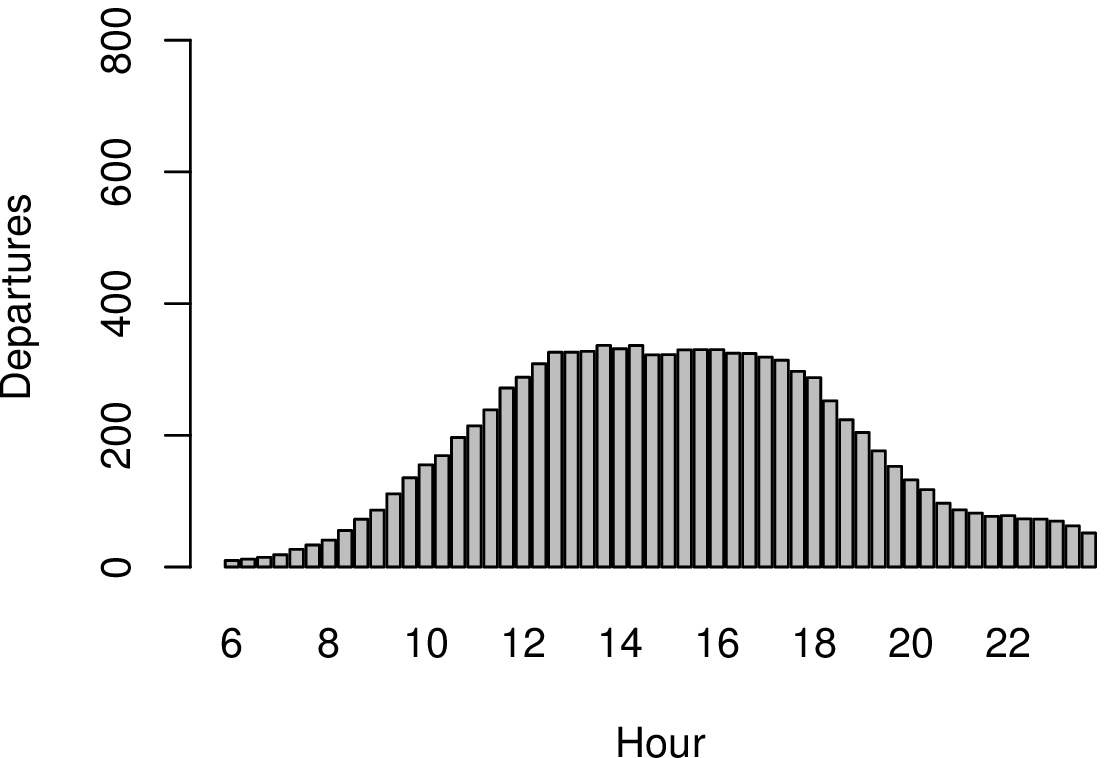}
      \caption{Average departure rates on weekends}
      \label{fig:rides-weekend}
    \end{minipage}
    \label{fig:daily-rides}
  \end{figure}
\end{subfigures}

\begin{subfigures}
  \begin{figure}
    \begin{minipage}[h]{.475\linewidth}
      \centering
      \includegraphics[width=\textwidth]{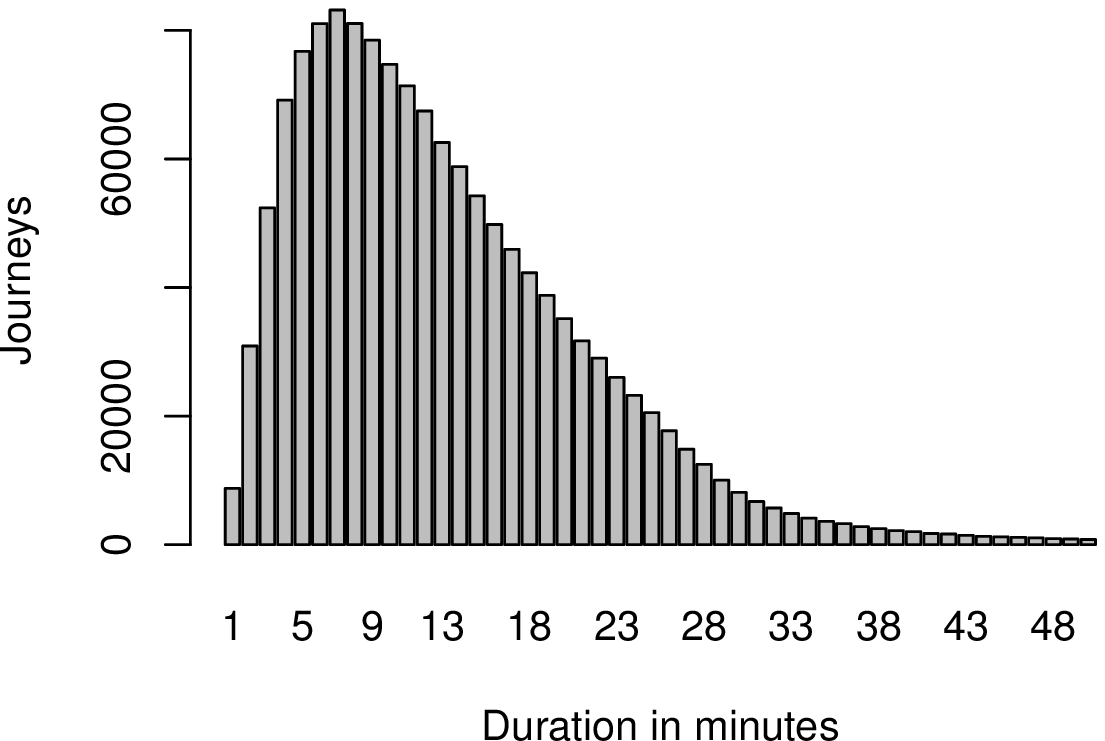}
      \caption{Distribution of the journey duration in the historical data}
      \label{fig:journey-duration}
    \end{minipage}
    \hfill
    \begin{minipage}[h]{.475\linewidth}
      \centering
      \includegraphics[width=\textwidth]{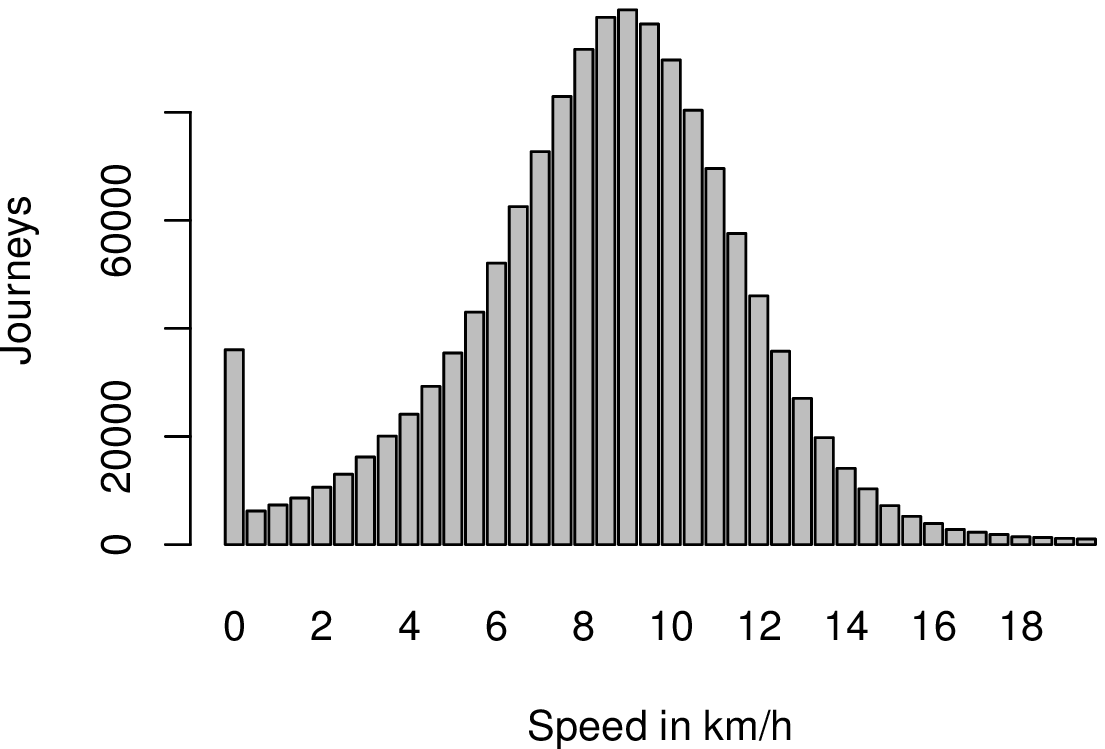}
      \caption{Journey speed distribution (Euclidean station distances)}
      \label{fig:journey-speed}
    \end{minipage}
    \label{fig:journey-figures}
  \end{figure}
\end{subfigures}

\subsection{Model parameters}

In our model, we define a set $S$ containing all stations $s\in S$. Time $t$ is
assumed to be discrete and indexed on a one-minute level where $\Thist$ denotes all
time steps of the observed period. We distinguish between workdays and days on
weekends by the binary variable $w\in\{weekday, weekend\}$ and split every day
into 72 slices $k \in K$ of 20 minutes each.
Time is mapped to day-type and timeslice using $w(t)$ and $k(t)$, respectively.
All customer departure and arrival events are counted in matrices of dimension
$|S|\times |S|$. The sum of departing customers going from station $i$ to $j$ in
a timeslice $k$ and on a day $w$ is $D_{i,j}(k,w)$; similarly the sum of
customers who arrive at station $j$ coming from $i$ is $A_{i,j}(k,w)$. The
average number of such arrival events $\Lambda_{i,j}(t)$ and departure events
$M_{i,j}(t)$ at time $t$ in the historical data can thus be expressed as
\begin{align}
  M_{i,j}(t) &= \frac{D_{i,j}\left(k(t),w(t)\right)}{|\{t'\in \Thist: k(t') = k(t), w(t') = w(t)\}|},\\
  \label{eq:averagetotal}
  \Lambda_{i,j}(t) &= \frac{A_{i,j}\left(k(t),w(t)\right)}{|\{t'\in \Thist: k(t') = k(t), w(t') = w(t)\}|}.
\end{align}
Here, the denominator gives the duration of the recorded history for a given
timeslice and day-type indicated by $t$.

Based on these average numbers of events happening per time step, customer
departures are exponentially distribution with time-varying parameter
$M_{i,j}(t)$. This parameter fit is based on the following implicit assumptions:
\begin{itemize}
\item 100\% service rate for departures in the historical data. Potential
  customers who could not rent a bicycle due to an empty station are excluded,
  as they are not recorded in the historical data. This assumptions is justified
  to some degree by the considerable repositioning effort made by the operator
  of the London PBS scheme \cite{stannard2011best}.
\item Independence of customer arrival. Departure of customers vary with time
  and type of the day, but do not depend on other departures. As a caveat, this
  does not accurately model customer groups, for example tourists.
\item Effects of season, weather, events, etc. are disregarded, but could easily be
  included in a more detailed model.
\end{itemize}
If a customer has departed at a station, the probability distribution of his
destinations is given by their relative frequency in the historical data, as
recorded in $\Lambda_{i,j}(t)$.

The total expected departure $\mu_s(t)$ and arrival $\lambda_s(t)$ at each station and
the \emph{net change of fill level} $\eta_s(t)$ during time step $t$ is therefore
\begin{equation}
  \mu_s(t) = \sum_{\tilde s\in S} M_{s, \tilde s}(t),\hspace{1em} \lambda_s(t) = \sum_{\tilde s\in S}\Lambda_{s,\tilde s}(t),
\end{equation}
\begin{equation}
  \label{eq:eta}
  \eta_s(t) = \lambda_s(t) - \mu_s(t).
\end{equation}
In order to simulate the system, the following assumptions about the behavior of
customers are needed, in addition to their arrival rates:
\begin{itemize}
\item Customers who want to depart from a station that turns out to be empty leave without starting a journey. They do not wait for a bicycle to be
  returned, nor do they walk on to a neighboring station.
\item The travel time between any two stations $i$ and $j$ is always equal to
  the average travel time extracted from the historical data.
  Figure~\ref{fig:journey-figures} depicts the historical distribution of travel
  speeds and journey durations.
\item Customers who arrive at their target station wanting to return their
  bicycle when the station is full ride on to one of the neighboring stations
  (chosen according to his perceived utility, as described in Section
  \ref{sec:cust-react-incent}). If this station is also full, a customer will go
  on to the next station, but he does not return to any station already visited.
\end{itemize}

\subsection{Customer decision model}
\label{sec:cust-react-incent}
In order to investigate the effects of offering price incentives, a model of how
the customer reacts is required. We assume all customers place a value on the
additional time they would spend travelling if they were to accept an incentive.
This is equivalent to penalizing a longer travel distance. The additional
distance a customer has to travel if he changes his target station consists of
the additional distance he has to bike, plus an additional walking distance to
his final destination. We assume this final destination lies at the center of
mass of the Voronoi region around each station
(Figure~\ref{fig:voronoi-system}). The Voronoi region is the polytope that contains
all points closer to a given station than to any other.
\begin{figure}[ht]
    \centering
    \includegraphics[width=.6\textwidth]{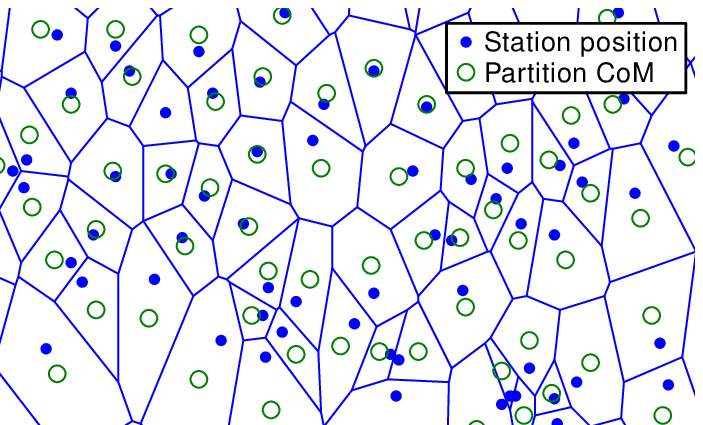}
    \caption{Voronoi partitioning with centers of mass (CoM) of the
      London bike sharing system}
    \label{fig:voronoi-system}
\end{figure}

Let the center of mass of the Voronoi region around each station $s_{i}$ be denoted
by $m_{i}$, and let $d_{\textrm{eucl}}$ be the Euclidean distance on the map.
Assuming that the walking speed is half the cycling speed, the \emph{effective
  distance} $\tilde d_{i,j}$ between two stations $s_{i}$ and $s_{j}$ can be
expressed as
\begin{equation}
  \label{eq:combined-distance}
  \tilde d_{i,j} = d_{\textrm{eucl}}(s_i,s_j)+2\cdot d_{\textrm{eucl}}(s_j,m_j)-2\cdot
  d_{\textrm{eucl}}(s_i,m_i).
\end{equation}

The incentives to go to a neighboring station are offered to customers upon
their arrival. Each customer decides whether to take an incentive by maximizing
his personal benefit based on the incentive payout and the customer's perceived
cost of additionally traveled distance. This implies customer rationality and
makes the choice independent from the original pricing of journeys. For each end
station $s\in S$, the set of neighboring stations to which a 
price incentive could be offered is $N_s$, and $p_{s,n}$ denotes the amount of money
offered to from station $s$ to neighbor $n\in N_s$. In addition, let $\tilde
N_s$ be the set of stations $\tilde s$ which have $s$ in their neighbor set. The
following model of customer reactions is used.
\begin{enumerate}
\item The marginal cost of travel $c$ for each arriving customer is drawn from a
  uniform distribution $C\sim U[0,c_{\max}]$, where we have used
  $c_{\max}=\pounds 20$/km in our simulations.\footnote{Future work could
    incorporate more detailed models of how customers value
    their time (see \cite{mackie2001value, hess2005estimation}). In addition, instead
    of having a single distribution, one could also differentiate between
    customer types (e.g. those commuting to work and those riding during leisure
    times) by introducing a time-varying component.}
\item The customer selects the best offer of maximum value as
  \begin{equation}
    \label{eq:bestvalue}
    n^* = \underset{n\in N_s}{\arg\,\max} p_{s,n}-d_{s,n}c.
\end{equation}
\item If the original target station is full, i.e.\ the customer cannot return
  his bike there, he always chooses the best incentive to go to $n^*$. If there
  is space, he takes an incentive only if the perceived value of the best
  incentive is positive, i.e.\ if $p_{s,n^*} - d_{s,n^*}c > 0$.
\end{enumerate}
The probability $\pi(s,n,p_s)$ of an arriving customer taking an incentive to
neighbor $n\in N_s$ for a given payout vector $p$ depends on the distribution of
perceived travel costs $c$. First, the offering to go to $n$ must have the
highest perceived value amongst all incentives offered to neighboring stations.
Second, assuming the station $s$ is not full, the perceived cost for traveling
the additional distance must be lower than the relevant payout $p_{s,n}$.
\begin{equation}
  \label{eq:prob-incentives}
  \begin{aligned}
  \pi(s,n,p_s) &= P\left(p_{s,n} \geq c\cdot\tilde d_{s,n}\ \land\ p_{s,n}-(c\cdot \tilde d_{s,n})\right.\\
&\left.\geq p_{s, n'}-(c\cdot \tilde d_{s,n'}),\forall n'\in N_s\right)
  \end{aligned}
\end{equation}

\section{Utility of changes in station fill level}
\label{sec:utility}
In this paper, two methods are considered for influencing the distribution
bicycles in the PBS: manual repositioning (Section \ref{sec:trucks}) and
price-led repositioning via incentives (Section \ref{sec:pricing}). However, as
a basis for both algorithms, it is necessary to estimate any change in the
stations' fill levels will bring about. Since the system is stochastic, it is
not straightforward to assess these benefits. In this section we introduce a
function that estimates the utility of changes in fill levels for a given
station.

Raviv and Kolka \cite{raviv_optimal_2013} have done related work in order to
determine the best fill level of each station in a static repositioning setting.
Their approach tracks the probability of all possible fill levels based on a
discrete approximation of the underlying continuous birth-death process.
However, if we were to adopt this method, the dimension of the resulting
optimization problem would significantly increase the computational complexity
of the proposed approaches in Sections~\ref{sec:trucks} and~\ref{sec:pricing}.
Therefore we propose a simpler approach.

We make the simplifying assumption that arrivals and departures are
deterministic and given by the expected net change $\eta_s(t)$. Furthermore, we
define the utility of changes to a station's fill level as the difference in the
number of customers expected to be served successfully at that station within a
long enough (but finite) time horizon. The benefit of any repositioning action
(adding or taking away bikes at a single station) at the current time can then
be evaluated based on this notion.

For a given station $s$ and starting time $t_0$, we precompute the expected
future fill level $f^s_t$ over a time horizon with $t = t_0, \ldots,
t_0+\Tutil$, where $\Tutil = 24$ hours is the look-ahead period considered. The
expected fill level is governed by the following dynamics:
\begin{equation}
\label{eq:fillupdate}
f^s_{t+1} = \max\left(0, \min(f^s_t+\eta_s(t),f^s_{\max})\right),
\end{equation}
where $f^s_{\max}$ denotes the maximum capacity of the station, and the $\max$
and $\min$ functions ensure that the station never becomes ``more than full'' or
``less than empty''. The quantity $\eta_s(t)$ is the net arrival rate defined by
(\ref{eq:eta}).

Adding or taking away bikes from the station at the current time changes how
many customers can be served later on in the time horizon, since the station
will become empty or full at different times in the future. For a current fill
level $f^s_{t_0}$ and a change in fill level $\Delta f$ to be made at $t_0$,
Algorithm~\ref{alg:utility} computes the utility $u(s,t,f^s_t,\Delta f)$ by
comparing the two cases of different initial fill level. The algorithm moves
through the time horizon forward in time, and in each time step compares the
amount of change $\Delta$ and $\tilde \Delta$ resulting from the system dynamics
and the station size constraints. If the amount of change is lower in the
original case, a station size constraint is being hit earlier than in the
adapted case and vice versa. The difference between $\Delta$ and $\tilde \Delta$
is the difference in customers that could be served successfully in that time
period. The procedure aborts if the end of the time horizon has been
reached or if the fill level of the station becomes equal in both cases (e.g.
both are full or empty).
\begin{algorithm}
\begin{algorithmic}[1]
\Require $s\in S, \Tutil$ \Comment{Relevant station and horizon length}, \\
$f^s_{\max}$ \Comment{Maximum capacity of station $s$}, \\
$\eta_s(t)$ \Comment{Expected net arrival of customers}, \\
$f^s_t\ \forall t\in \{t_0, \ldots, t_0+\Tutil\}$ \Comment{Precomputed fill levels in the original case according to (\ref{eq:fillupdate})}, \\
$\tilde f^s_{t_0}$ \Comment{Starting fill level in the case with repositioning}
\Procedure{RepositioningUtility}{$t,\tilde f^s_t$}
\If{$\tilde f^s_t = f^s_t$ \textbf{or} $t \geq \Tutil$}
\State \Return 0
\EndIf
\State $\tilde f^s_{t+1} \leftarrow \max(0, \min(\tilde f^s_t+\eta_s(t),f^s_{\max}))$
\State $\Delta \leftarrow |f^s_t-f^s_{t+1}|$
\State $\tilde \Delta \leftarrow |\tilde f^s_t - \tilde f^s_{t+1}|$
\State \Return{$\tilde \Delta- \Delta + $\Call{RepositioningUtility}{$t+1,\tilde f^s_{t+1}$}}
\EndProcedure
\end{algorithmic}
\caption{Computing the utility of repositioning}
\label{alg:utility}
\end{algorithm}

Although the worst-case time complexity of Algorithm \ref{alg:utility} is linear
with respect to the horizon length, it would take too long to use online in the
later optimization steps. Storing the results in a lookup table is intractable,
especially if the expected future fill levels are treated as continuous values.
However, fast computation can be achieved by constructing a simpler function,
making use of a lookup table of dimension $2\times|S|\times \Tutil$, from which
this utility can be determined. Figure~\ref{fig:plateau} shows an example of
such a utility function for an empty station, $u(s,t,f^s_t=0,\Delta f)$. For
simplicity, $\Delta f$ is also relaxed to be non-integral. The validity of this
parameterisation is now proven.
\begin{figure}
  \centering
  \includegraphics[width=.6\textwidth]{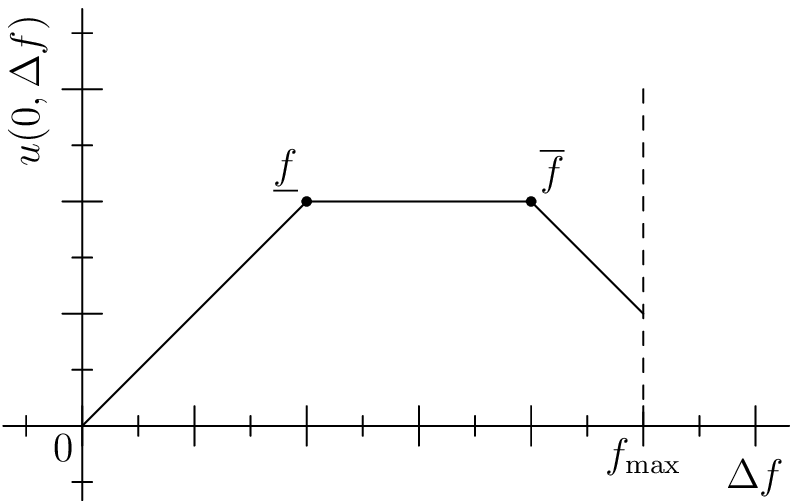}
  \caption{Example for an empty station ($f=0$), showing the utility ``plateau''
    between the fill levels $\underline f^s_t$ and $\overline f^s_t$. In this case maximum utility results from adding 4 to 8 bicycles at $t_0$, based on the expected net arrival rate over the time horizon.}
  \label{fig:plateau}
\end{figure}

\begin{theorem}
  \label{eq:theorem1}
  For any station $s\in S$ at time $t$, there exist two fill levels $\underline f^s_t,
  \overline f^s_t \in [0, f^s_{\max}]$ independent from the initial fill level
  $f^s_t$ such that for the utility of change in fill level $\Delta f$
  \begin{equation}
    \frac{\partial u(s,t,f^s_t,\Delta f)}{\partial\Delta f} = \begin{cases}
      1, &\text{if } f^s_t+\Delta f < \underline f^s_t \\
      0, & \text{if } \underline f^s_t \leq f^s_t+\Delta f \leq \overline f^s_t\\
      -1, & \text{if } f^s_t+\Delta f > \overline f^s_t.
    \end{cases}
    \label{eq:udiff}
  \end{equation}
  The utility of no change of the station's fill level is understood to be zero,
  $u(s,t,f^s_t,0) := 0$. The (possibly empty) interval $[\underline f^s_t,
  \overline f^s_t]$ of utility is called the ``plateau'' of constant maximum
  utility.
\end{theorem}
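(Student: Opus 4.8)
The plan is to treat the utility as a function of the single variable $v := f^s_t + \Delta f$, the target fill level of the adapted trajectory, and to show that its derivative is piecewise constant with the three values claimed. First I would observe that once the two trajectories coincide the recursion in Algorithm~\ref{alg:utility} returns zero, and since coinciding trajectories produce identical increments $\tilde\Delta = \Delta$ thereafter, the returned value is unchanged if the sum is extended over the full horizon $t=t_0,\dots,t_0+\Tutil$. Because the original trajectory $f^s_t$ does not depend on $v$, this gives $u = \sum_t \tilde\Delta_t(v) - (\textrm{const})$, so that $\partial u/\partial\Delta f = \partial u/\partial v = \frac{d}{dv}\sum_t \tilde\Delta_t(v)$, and the original starting level drops out entirely --- which is already the source of the claimed independence of $\underline f^s_t,\overline f^s_t$ from $f^s_t$.

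Next I would analyse the sensitivity of the adapted trajectory to $v$. Writing $\tilde f^s_{t+1}=g(\tilde f^s_t+\eta_s(t))$ with $g$ the clip onto $[0,f^s_{\max}]$, the map is monotone and $1$-Lipschitz, so $d\tilde f^s_{t+1}/d\tilde f^s_t$ equals $1$ while the step stays strictly interior and $0$ at any step where a bound is active. By the chain rule $d\tilde f^s_t/dv$ is therefore $1$ up to the first time $t^\star$ a constraint becomes active and $0$ from then on: the clip \emph{absorbs} the perturbation. Each increment obeys $\tilde\Delta_t=\min(|\eta_s(t)|,\textrm{room})$, whose derivative in $\tilde f^s_t$ is $0$ on interior steps, $-1$ when the upper bound $f^s_{\max}$ is active (arrivals lost), and $+1$ when the lower bound $0$ is active (departures lost). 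Combining these, every term except the one at $t^\star$ vanishes, and I obtain $\partial u/\partial\Delta f = +1$ if the adapted trajectory first runs empty, $-1$ if it first runs full, and $0$ if no bound is hit within the horizon.

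It then remains to show that these three regimes are exactly the three intervals in $v$ claimed. Here I would use that $g(\,\cdot\,+\eta_s(t))$ is order-preserving, so the whole trajectory $\tilde f^s_\cdot(v)$ is monotone non-decreasing in $v$; hence a larger $v$ can only reach the upper bound earlier and the lower bound later. Consequently the set of $v$ whose trajectory hits $f^s_{\max}$ before $0$ is an upper set, the set hitting $0$ first is a lower set, and, being disjoint, they must be separated, $\sup(\textrm{empty-first set})\le \inf(\textrm{full-first set})$. Defining $\underline f^s_t$ and $\overline f^s_t$ as these two thresholds (clipped into $[0,f^s_{\max}]$) yields $\underline f^s_t\le\overline f^s_t$, the plateau $[\underline f^s_t,\overline f^s_t]$ on which no bound is hit, and the sign pattern $+1/0/-1$ as required; the case $\Delta f = 0$ gives $u=0$ by definition, consistent with the $v$-based formulation. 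The main obstacle I anticipate is this last monotonicity/ordering step: one has to rule out any interleaving of the ``runs full first'' and ``runs empty first'' regions, and to verify that the perturbation is genuinely killed at the first active constraint so that only $t^\star$ contributes, since these two facts are precisely what force the thresholds to be clean, $f^s_t$-independent constants rather than quantities depending on the trajectory's detailed path.
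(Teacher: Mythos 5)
Your proposal is correct, and it reaches the theorem by a noticeably more formal route than the paper does. The paper's own proof is a direct counting argument: if the adapted trajectory runs full within the horizon, then adding $\delta$ further bicycles forces exactly $\delta$ additional returning customers to be rejected, giving slope $-1$; symmetrically, slope $+1$ when the station runs empty; the existence of the two thresholds and the constant plateau between them is then simply asserted (``it follows naturally''), and is even phrased ``for a given initial fill level,'' leaving the theorem's claimed independence from $f^s_t$ essentially unargued. You instead reparameterize by $v=f^s_t+\Delta f$, observe that Algorithm~\ref{alg:utility} computes a full-horizon sum of increment differences in which the original trajectory enters only as an additive constant, and then run a chain-rule/Lipschitz analysis of the clipped dynamics: the unit perturbation survives until the first active constraint, is absorbed there with contribution $\pm 1$, and contributes nothing afterwards. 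Your order-preservation argument (larger $v$ gives a pointwise larger trajectory, so the ``full-first'' set of $v$ is an upper set, the ``empty-first'' set a lower set, and the two cannot interleave) is exactly the missing justification for the paper's ``it follows naturally'' step. What the paper's argument buys is brevity and physical transparency, since rejected customers are counted directly; what yours buys is rigor on precisely the two points the paper leaves implicit --- the $f^s_t$-independence of $\underline f^s_t,\overline f^s_t$, which is immediate once everything is a function of $v$ alone, and the clean three-interval structure of the sign pattern. The only cosmetic discrepancy is that the paper asserts $\underline f^s_t<\overline f^s_t$ strictly, whereas your separation argument yields $\underline f^s_t\le\overline f^s_t$, which is in fact what the theorem's ``(possibly empty) interval'' wording requires.
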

\begin{proof}
  Assume the station will become full within the time horizon $\Tutil$ for
  initial fill level $f^s_t+\Delta f$. Adding $\delta$ bicycles to the station
  implies that $\delta$ additional expected customers who want to return their
  bicycles have to be rejected. Therefore, the utility $u(s,t,f^s_t,\Delta f+\delta)$
  decreases monotonically with slope -1 for any $\delta \geq 0, f+\Delta f+\delta \leq f^s_{\max}$.
  \[u(s,t,f^s_t,\Delta f+\delta) = u(s,t,f^s_t,\Delta f)-\delta\] A similar case
  can be made for fill levels where the station is running empty. Only, the
  utility decreases when more bicycles are removed with a slope of $u$ equal to
  1. It follows naturally that for a given initial fill level there are
  thresholds $\underline f^s_t, \overline f^s_t$, with $\underline f^s_t <
  \overline f^s_t$, where the station first starts to run empty or full within
  the horizon. Within the interval $[\underline f^s_t, \overline f^s_t]$, the
  station's capacity constraints are not hit and the utility function must
  therefore be constant, leading to a ``plateau'' of the type shown in
  Fig.~\ref{fig:plateau}.
\end{proof}

It can be shown that $u(s,t,f^s_t,\Delta f)$ can be computed for each station using
only three calls to Algorithm \ref{alg:utility}. Sections \ref{sec:trucks} and
\ref{sec:pricing} will make use of this characterization of station fill
utilities in order to choose how trucks reposition bikes and price incentives
can be offered to customers.

The system dynamics (\ref{eq:fillupdate}) are based on deterministic net arrival
of customers. This results in a coarser model than for example the probabilistic
approach of \cite{raviv_optimal_2013}. However, this simple parameterization is
attractive in that it leads to tractable optimization problems, as will be seen
in Sections \ref{sec:trucks} and \ref{sec:pricing}.

\section{A dynamic truck-routing algorithm}
\label{sec:trucks}

This section describes an algorithm for intelligent operation of a fleet of
$R$ trucks, which move bikes between stations as needed. Their objective
is to increase the system utility (as defined in Section \ref{sec:utility}), and
hence ultimately the system's service level, as much as possible.

The problem of manual relocation of vehicles in shared vehicle systems is not
new. It originated in pilot car-sharing projects, such as the French
\emph{Praxitèle} \cite{dror_redistribution_1998,duron_analysis_2000},
\emph{Intellishare} \cite{barth2004} and \emph{Honda ICVS}
\cite{kek_decision_2009}. However, the repositioning algorithms used in
car-sharing projects do not translate directly to the public bicycle-sharing
scheme considered in this paper. Firstly, each of these algorithms exhibits
certain characteristics that are specific to its corresponding vehicle-sharing
system, for example charging times of electric vehicles. Secondly, car-sharing systems
tend to be much smaller in their network size than the PBS considered in our
paper, and the proposed algorithms cannot easily be scaled to several hundred stations.

Intelligent repositioning in bicycle-sharing schemes has also received prior
attention in the literature \cite{shaheen_bikesharing_2010}. The proposed
approaches can be separated into static and dynamic approaches.

In the static repositioning approach, an optimal route is computed in order to
attain a predefined fill level for each station, prior to customers interacting
with the system (e.g.\ during the night). For example,
\cite{benchimol_balancing_2011} present a solution for the routing of a single
truck, and \cite{raviv_static_2012} consider the case of multiple trucks. The
advantage of this approach is that there is ample time to compute a good truck
routing solution, and this solution could serve as a reliable basis for
computing the price incentives (Section \ref{sec:pricing}). However, static
repositioning has shown too little flexibility to react to unforeseen variations
in the demand pattern, caused for example by unexpected weather conditions.

In the dynamic repositioning approach, the truck routing is planned in a
receding horizon fashion while the system is in full operation. This allows the
planning to react online to unexpected changes in the system's state. As such it
is a more suitable approach to our problem. Rair and Miller-Hooks
\cite{nair_fleet_2011} use a stochastic formulation for dynamic repositioning
based on stationary distributions for customer arrivals and departures. However,
their approach is not suitable to the case of our PBS, since these distributions
vary considerably according to the distinct daily flow patterns described in
Section \ref{sec:intro}. Contardo et al. \cite{contardo_balancing_2012} present
another dynamic repositioning approach with time-varying, yet deterministic
future flow patterns. But the computational complexity of their approach is
prohibitive for our system, because it is too expensive to simulate the system
with multiple trucks over a long time horizon.

Whilst many PBS schemes redistribute bicycles during the night, nighttime
operation is restricted in London in several key areas
\cite{redistribution_2012}. Therefore, this paper considers the dynamic
repositioning case only. It is a variation of the routing problem with pickups
and deliveries for one commodity, taken from \cite{berbeglia2010dynamic}. It is
illustrated in Figure \ref{fig:truckroute}. In order to determine a truck route,
time is discretized into 5-minute intervals and the planning problem is
considered on the time-expanded network (Section \ref{sec:stationtime}) During
each interval, customer behavior is assumed to be time-varying, but
deterministic. A receding planning horizon is considered, which is the maximum
of the truck visiting $4$ stations and $\Ttruck= 40\,\text{min}$. The period for
re-optimizing of the truck routes (``implementation horizon'') is chosen as
$T_{\textrm{impl}}=30\,\text{min}$, based on a trade-off between computation time and performance
quality. Note that as shown in Figure \ref{fig:truckroute}, the planning horizon
is longer than the re-optimization cycle. This improves the performance of truck
journeys beginning shortly before the next re-optimization as these journey are
likely to end within the longer planning horizon where subsequent opportunity is
still considered.

To solve the routing problem for a single truck (Section \ref{sec:greedy}) over
the finite planning horizon, we adopt a two-step approach. First, for each truck
we construct a tree of ``promising route candidates'' using a greedy heuristic.
This means that truck routes are extended by stations that promise high ratios
of utility added per time to travel. Second, for each of the promising routes,
the optimal number of bikes to be loaded or unloaded at each stop is optimized
as the complete routes have become known. Then the route providing the highest
utility improvement is selected. Finally, we extend this algorithm in a simple
manner to multiple trucks (Section \ref{sec:multiple}).

\begin{figure}[t]
    \centering
    \includegraphics{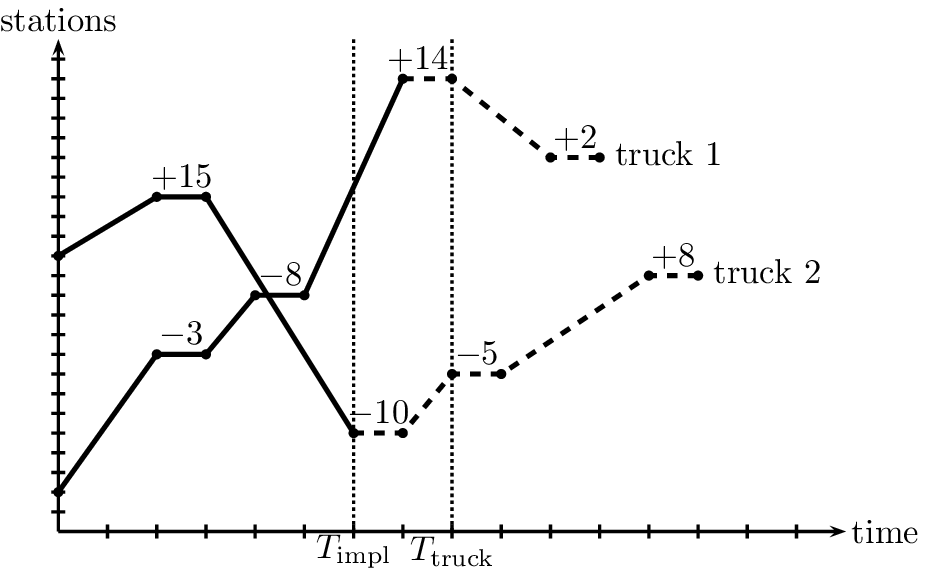}
    \caption{Illustration of the dynamic truck-routing algorithm for $R=2$
      trucks. The time axis is discretized into $5\,\text{min}$-intervals. The
      planning horizon is the maximum of $\Ntruck=4$ stations visited by the
      truck and $\Ttruck=40\,\text{min}$; the implementation horizon is
      $T_{\textrm{impl}}=30\,\text{min}$. Each line represents the route of one
      truck, where journeys are indicated by a solid line if they start within
      the implementation horizon (i.e.\ they are definitely executed) or by a
      dashed line if they start with the planning horizon (i.e.\ they may be
      subject to change at future re-plannings). The trucks wait at each stop
      for $5\,\text{min}$ in order to to load or unload bikes. The number of
      bikes loaded and unloaded at each stop is indicated as well.
      \label{fig:truckroute}}
\end{figure}

\subsection{Modeling repositioning truck routes on a time-expanded network}
\label{sec:stationtime}
We now describe how the truck routes can be modeled as a time-expanded network
on a graph $G=(V,A)$ \cite{kek_decision_2009,contardo_balancing_2012}, which is
the basis of our truck-routing algorithm. The vertices $V$ of the graph consist
of tuples $v=\{(s,t),s\in S,t\in T\}\in V$.\footnote{The notation $s(v)$ is used
  to access the component $s$ of the tuple $v=(s,t)$.} The (expected future)
fill level $f$ for each vertex $v$ (i.e.\ station $s$ at time $t$) is generated
according to (\ref{eq:fillupdate}). The arcs $a = (v_1, v_2) \in A$ of $G$
correspond to possible journeys a truck is able to take.

In our model, the time it takes for a truck to traverse an arc is discretized to
multiples of 5 minutes. It is computed based on the Euclidean distance (in km)
$d_{\textrm{eucl}}(s_{i},s_{j})$ between two stations $s_{i},s_{j}\in S$,
assuming an average speed of $15\,\text{km/h}$ for the truck in city traffic.
Including an additional $5\,\text{min}$ for bicycle handling after reaching the
station, the resulting \emph{effective journey time} (in time steps of 5 min) for a truck to go from
station $s_{i}$ to $s_{j}$ is
\begin{equation}
  \bar d(s_{i},s_{j}) := \lceil d_{\textrm{eucl}}(s_{i},s_{j})/1.25 \rceil + 1.
\end{equation}
Note that the dividing factor of 1.25 results from converting distance (in km)
into time steps (of 5 min). The repositioning trucks have limited operation
hours during the day, set to $7\,\text{am}-10\,\text{pm}$. All trucks are
constrained to start at a maintenance depot in the morning, and also to finish
at this depot at the end of the working day. As a consequence, vertices from
which no combination of arcs leads back to the depot on time are excluded from
the graph. An example of a network of stations and a corresponding time-expanded
network are shown in Figures~\ref{fig:station-graph}
and~\ref{fig:station-time-graph}.
\begin{figure}
\begin{minipage}[b]{0.45\linewidth}
\centering
\includegraphics[height=4cm]{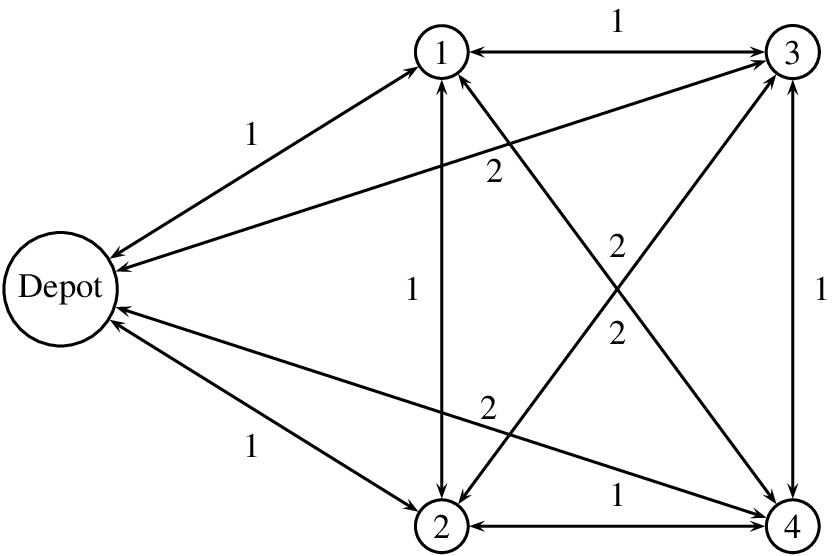}
\caption{Example station graph, with arcs weighted according to distance.}
\label{fig:station-graph}
\end{minipage}
\hfill
\begin{minipage}[b]{0.53\linewidth}
\centering
\includegraphics[width=\textwidth]{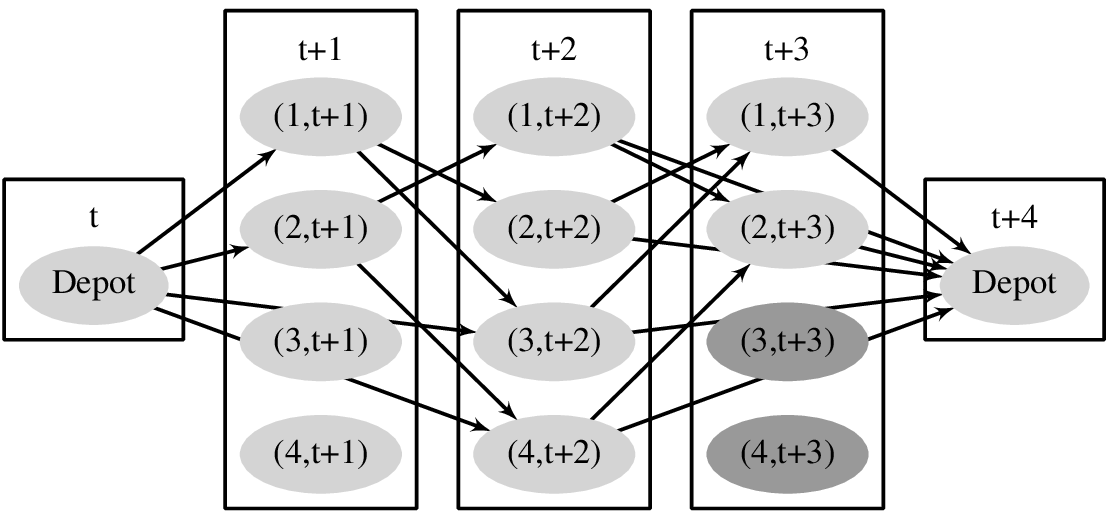}
\caption{Example time-expanded network. Dark-grey vertices are marked ``dead''
  due to the terminal condition.}
\label{fig:station-time-graph}
\end{minipage}
\end{figure}

\subsection{Computing single-truck routes}\label{sec:greedy}
In this Section, we present an algorithm for finding a repositioning route for a
single truck in the dynamic case. The amount of time available for repositioning
is assumed to be fixed. In contrast to the static repositioning case, the goal
is not to attain a defined system state as little resources as possible, but
rather to optimally invest the available resources, i.e. operational hours of
the trucks. Thus, based on the utility of fill level changes defined in
Section~\ref{sec:utility}, the ratio of added utility per invested time of a
truck $r$ is maximized.

\paragraph{Constructing a tree of promising candidate routes}
Every truck $r\in R$ can hold a maximum load $l_{\max}=20\,\text{bikes}$ and holds
$l^r_t\in\{0, \ldots, l_{\max}\}$ bicycles at time $t$. The goal is to determine
a truck's \emph{repositioning actions} $\rho_i = (v_i, \Delta f_i, l^r_i)\in P:
v_i\in V, l_i\in\{0, \ldots, l_{\max}\}, i\in\{1, \ldots, \Ntruck\}$. The truck
load $l^r_i$ is defined as the number of bicycles after the repositioning action
$\Delta f_i$ performed at the station and time indicated by the vertex of the
time-expanded network $v(\rho_i)$. In particular, for every pair of actions
$\rho_i, \rho_{i+1}$ there must exist an arc $a\in A$ with
\begin{equation}
v_1(a) = v_i\ \land\ v_2(a) = v_{i+1},\quad  \forall i\in\{1, \ldots, \Ntruck-1\}.
\end{equation}
Moreover, the following consistency constraints for station fill levels
and truck loads must hold:
\begin{subequations}
  \begin{equation}
    f^{s(v_i)}_{t(v_i)} = f^{s(v_i)}_{t(v_i)-1} + \Delta f_i \in [0, f^{s(v_i)}_{\max}],\hfill \forall i\in \{1,\ldots, \Ntruck\}
  \end{equation}
  \begin{equation}
    l^r_{i+1} = l^r_{i}- \Delta f_{i+1}\in \{0, \ldots, l_{\max}\},\hfill \forall i \in \{1,\ldots, \Ntruck-1\}
  \end{equation}
\end{subequations}

Starting from an initial repositioning position $\rho_1$, the possible next steps
can be represented by a tree graph $\Phi$, where each node $\phi$ represents a
specific repositioning action and each leaf node determines a unique truck
route. The tree of all possible routes has a branching factor of $|S-1|$, since a
route could possibly lead to any of the other stations for the next
repositioning action. So there are $|S-1|^{\Ntruck-1}$ possible combinations of
stations for a route of length $\Ntruck$ (where the initial position $\rho_1$ is
already known). For systems consisting of several hundred stations it is thus
not viable to test every possible combination. The complexity of the problem is
reduced by concentrating on a subset of possible routes corresponding to the
most promising repositioning actions. It works by constructing a pruned version
of the route tree. Starting from the trucks initial position (at the root), the
tree is recursively extended at each of its leaf nodes until it has reached the
desired height of $\Ntruck$ (or the journey has a minimum duration of
$\Ttruck$):

\begin{enumerate}
\item The current leaf node is $\phi = (v_{\phi},\Delta f_{\phi}, l_{\phi})$. First, we compute the
  ``value per unit distance'' the truck might bring by going to any of the other
  stations. The set $\tilde V:=\{v\in V:\exists a\in A, v_1(a) = v_{\phi},
  v_2(a) = v\}$ contains the vertices of the time-expanded network the truck
  would reach by going to any of the other stations. Since we know the current
  load of bicycles $l^r_{\phi}$, the best action to be performed at $\tilde v\in
  \tilde V$ can be computed with a ``greedy'' approach as
\begin{equation}
\begin{aligned}
  \label{eq:best-action}
    &\Delta f^*(\tilde v,\phi) =\\
 &\begin{cases}
      \max\left(l^r_\phi-l_{\max}, \left\lceil \overline f^{s(\tilde v)}_{t(\tilde v)} - f(\tilde v)\right\rceil\right), &\text{if } f(\tilde v) > \overline f(\tilde v)\\
      \min\left(l^r_{\phi},\left\lfloor \underline f^{s(\tilde v)}_{t(\tilde v)} - f(\tilde v)\right\rfloor\right), &\text{if } f(\tilde v) < \underline f(\tilde v)\\
  0, &\text{else}.
\end{cases}
\end{aligned}
\end{equation}
We choose the $K$ vertices with the best ratio $\Delta f^*(\tilde v,\phi) / \bar
d(v_{\phi},\tilde v)$ and add them as leaves of $\phi$ in the form of route
steps. The set of new leaf nodes is $\Phi_\phi$.
\item In addition, we add stations that could serve as an intermittent depot.
  Going there may not yield a direct utility. But the possibility to bring or
  take bicycles may be of use at other stations of the route. We choose
  \begin{IEEEeqnarray}{rCl}
      \tilde v_{store} &=& \underset{\tilde v\in \tilde V: s(\tilde v)\neq s(v_{\phi})}{\arg\,\max} \frac{\min\left(l_{\textrm{depot}},\overline f^{s(\tilde v)}_{t(\tilde v)}-f(\tilde v)\right)}{\bar d\left(v_{\phi},\tilde v\right)}\IEEEeqnarraynumspace\IEEEyessubnumber\\
      \tilde v_{pick} &=& \underset{\tilde v\in \tilde V: s(\tilde v) \neq s(v_{\phi})}{\arg\,\max} \frac{\min\left(l_{\textrm{depot}},f(\tilde v)-\underline f^{s(\tilde v)}_{t(\tilde v)}\right)}{\bar d\left(v_{\phi},\tilde v\right)}\IEEEyessubnumber
  \end{IEEEeqnarray}
  and add them to the set of leafs $\Phi_\phi$ with a repositioning action of zero. To
  prevent that $\tilde v_{store}, \tilde v_{pick}$ are only set to very large
  stations, we cap the maximum intermittent depot size considered to some
  $l_{\textrm{depot}}\leq l_{\max}$. How stations that may serve as intermittent
  depots are incorporated into the actions at other steps of the route is
  explained in the following.
\item If the depth of the recursive procedure has not yet reached the final
  depth $\Ntruck$, it is repeated for every $\phi'\in \Phi_\phi$. Before
  entering the recursive procedure at $\phi'$, the corresponding repositioning
  action $\Delta f_{\phi'}$ is incorporated into the predicted future fill levels
  of the time-expanded network. These changes, of course, have to be unwound
  between the $\phi' \in \Phi_\phi$.
\end{enumerate}

If even more aggressive tree pruning is necessary to comply with computational
constraints, the similar Beam Search \cite{lowere1976harpy, peng1988filtered}
can be applied. It leads to linear complexity in the route length, but at the
expense of discarding more potentially optimal solutions. In Beam Search, a
greedy approach is used to determine promising next steps as well. But only $K$
leaf nodes are added in total to all nodes of the same height. Resorting to Beam
Search was, however, not necessary for the route length horizon used in the sample
setting of this paper.

\paragraph{Refining truck loading actions}
The repositioning actions $\rho_i,\ i=1,\ldots,\Ntruck$ of the promising routes
candidates in $\Phi$ stem from a greedy heuristic that could not know about
stations visited later in the route. Knowing the complete routes, their
respective action profile should be further refined. As a motivating example,
it may be beneficial to pick up more bicycles than the utility function of a
single station $u(f,\Delta f)$ (see Section~\ref{sec:utility}) originally
indicated. That is, if taking more bicycles has zero utility locally (the
fill level remains within the utility plateau), but a subsequent stations in the
route can make use of the additional bicycles. The problem of choosing optimal
actions can be formulated as a manageable quadratic program (QP).
\begin{IEEEdescription}[\IEEEsetlabelwidth{$\Delta \underline f'(i),\Delta \overline f'(i)$}\IEEEusemathlabelsep]
\item[$s(i), t(i)$] The station and time at the $i$-th step in the route for $i\in
  \{1,\ldots, \Ntruck\}$.
\item[$f_i$] The expected fill level of station $s(i)$ at time $t(i)$.
\item[$\Delta f_i$] The action performed at step $i$. This is the optimization
  variable.
\item[$\underline f(i),\overline f(i)$] Beginning and end of the utility plateau
  of station $s(i)$, as described in Section~\ref{sec:utility}.
\item[$\Delta \underline f(i),\Delta \overline f(i)$] Difference between the new fill
  level $f_i + \Delta f_i$ and the plateau beginning/end $\overline
  f^{s(i)}_{t(i)}, \underline f^{s(i)}_{t(i)}$. The difference is defined to
  grow positively going outwards from the respective side of the plateau.
\item[$\Delta \underline f'(i),\Delta \overline f'(i)$] Auxiliary variables containing the
  absolute difference from the plateau beginning $\Delta \underline f'(i) =
  |\Delta \underline f(i)|$ or end $\Delta \overline f'(i) = |\Delta \overline
  f(i)|$. They are correctly set by the solver minimizing costs within the
  bounds set in (\ref{eq:absequ1}) and (\ref{eq:absequ2}).
\item[$l_{0}, l_{\max}$] Starting fill level of the repositioning truck and the
  maximum truck load capacity.
\item[$q \gg 2l_{\max}^2+1$] Scaling factor for penalizing repositioning actions.
\end{IEEEdescription}
\begin{equation}
  \min \sum_{i=1}^{\Ntruck} \Delta \underline f(i) + \Delta \underline f'(i) + \Delta \overline f(i) + \Delta \overline f'(i) + \sum_{i=1}^{\Ntruck} \Delta f_i^2/q \label{eq:optaction}
\end{equation}
such that
\begin{IEEEeqnarray}{rcl'l}
0 \leq l^r_{t_0} &-& \sum_{i'=1}^{i} \Delta f_{i'} \leq l_{\max}, & \forall i\in \{1, \ldots, \Ntruck\}\IEEEyessubnumber\label{eq:capacityconstraint}\\
\Delta \underline f(i)\;&=&\;\underline f(i) - f_i - \Delta f_i, & \forall i\in \{1, \ldots, \Ntruck\}\IEEEyessubnumber\\
\Delta \overline f(i)\; &=&\; -\overline f(i) + f_i + \Delta f_i,  & \forall i\in \{1, \ldots, \Ntruck\}\IEEEyessubnumber\IEEEeqnarraynumspace\\
-\Delta \underline f'(i)\; &\leq&\; \Delta \underline f(i) \leq \Delta \underline f'(i), & \forall i\in \{1, \ldots, \Ntruck\}\IEEEyessubnumber\label{eq:absequ1}\\
-\Delta \overline f'(i)\; &\leq&\; \Delta \overline f(i) \leq \Delta \overline f'(i), & \forall i\in \{1, \ldots, \Ntruck\}\IEEEyessubnumber\label{eq:absequ2}
\end{IEEEeqnarray}

The linear part of the objective function (\ref{eq:optaction}) evaluates
repositioning actions according to the utility definition of
Section~\ref{sec:utility}. The quadratic part of (\ref{eq:optaction}) minimizes
the action of the truck operators (they take the fewest bicycles possible). It
is scaled such that actions with a positive utility will still be performed;
However, it prevent actions causing negative utility at one station and the
equal positive utility at another step in the route. This also ensures that
stations will not be pushed outwards from the plateau and actions remain
feasible. Equation~(\ref{eq:capacityconstraint}) ensures that the fill level of
trucks stays within the capacity constraints. If computation time allows, an
integer constraint $\Delta f_n \in \mathbb{Z},\ \forall n$ can be added to
reflect the discrete number of bikes. This renders the optimization problem into
a mixed-integer quadratic program (MIQP). In our approach, we manually fit the
solution to the truck load constraints based on the relaxed QP solution by
clipping any non-integer parts. In test runs no or only very little differences
from the MIQP were observed.


We now determine the best set of repositioning actions for all promising routes
and choose the route that results in the best overall utility increase per unit
time.

\subsection{Routing multiple trucks}\label{sec:multiple}
Co-optimized routes for several trucks are too difficult to compute online
within the time constraints of the PBS system. Therefore, we resort to
optimizing multiple truck routes sequentially where actions of prior trucks can
be treated as known. These actions are manifest in the future fill levels stored
in the graph of the time-expanded network.

The basic idea is to repeat adding route steps for each truck until it reaches a
minimum number of $\Ntruck$ steps, or a journey time of $\Ttruck$. This is
performed sequentially, starting with the truck who has the minimum time-index
for the last step in his route and has not yet reached the required route
length. The reason for this sequential procedure is to prevent the collision of
two truck routes, which can be detrimental to the performance of the algorithm
as explained below. Assume, without loss of generality, that the trucks $r\in R$
are ordered according to the current computation of their routes. If truck $r$
chooses to go to a station $s(v)$ to which a truck $r'<r$ has already planned to
go at a later time $t(v')>t(v), s(v') = s(v)$, then $r'$ has made his choice
based on false assumptions about the station's fill level. These collisions can
be handled by
\begin{itemize}
\item Removing all routing steps that were added during the last route-search
  step from $\Omega_{r'}$.
\item Removing all but the first routing steps that were added during the last
  route-search step from $\Omega_r$.
\end{itemize}

So collisions are prevented and since at least one new routing step is preserved
per detected collision, so our algorithm will eventually reach $\Ttruck$ for all
trucks.

\section{Dynamic price incentives for users}
\label{sec:pricing}

Customers themselves might contribute to the rebalancing of a PBS scheme if
offered an appropriate payment. In this paper we consider how payments could be
offered to customers to change the endpoint of their journey to a nearby station
in a way that improves the overall service level. To this end, we take the model
of how customers accept (or choose between) price offers, as described in
Section \ref{sec:cust-react-incent}, and then form an optimization problem
trading off the expected payouts and the expected improvement in service level.
The solution of this optimization problem is a set of price offers that are presented
to any customer arriving at a given station. It seems reasonable to reduce the
complexity of this optimization problem by limiting the number of prices offered
(= decision variables) to 10 per station; i.e.\ for each station, only 10 prices
are quoted for going to selected neighboring stations.

We assume that means of communicating the price incentives and for making
payments are available. A payment infrastructure is already central to existing
systems, like the \emph{Oyster Card} for London's public transport network. New
information capabilities could be added to the kiosk terminals used for rental,
and/or the mobile applications many customers already use.

Using price incentives to induce a desired behavior in the users of shared
mobility systems has been examined in multiple contexts in recent literature.
The work of \cite{barth2004} examines user-based repositioning in a shared
mobility system. However, the approach of splitting and merging rides can only
be applied to cars and not to public bike hire schemes. Incentives for bike
sharing schemes are investigated by \cite{fricker_incentives_2012}, where users
pick two stations at random and go to the more empty.

In this paper, we propose a novel scheme that is based on Model Predictive
Control (MPC). A short summary of MPC is given in Section
\ref{sec:model-pred-contr}. In Section \ref{sec:simpl-model-cust} we then
show how the customer reaction model from Section \ref{sec:model} can be
linearized in order to obtain a tractable MPC problem formulation. Finally,
Section \ref{sec:mpc} explains the details of the corresponding
optimization problem that has to be solved in a
receding-horizon fashion in order to determine the real-time price incentives.

\subsection{Model Predictive Control}\label{sec:model-pred-contr}
Here we provide a short introduction to Model Predictive Control (MPC \cite{maciejowski2002}), giving
the basic explanations required to describe the controller developed
in Section~\ref{sec:mpc}. The fundamental idea is to employ a model of a given system in
order to optimize the inputs given to the system over a finite control horizon.
Only the first input is then applied to the system, and the scheme continues by
measuring the new state of the system and solving another finite-horizon
optimization problem (``MPC problem'').

The two main aspects of MPC comprise good control decisions for the system with
respect to anticipated future events (by the optimization), and feedback in the
case of unforeseen disturbances or model inaccuracies (by re-optimizing
periodically for the control actions). An important strength of MPC is its
ability to incorporate a model of the system dynamics and to handle constraints
on the states and control inputs. For example, in the case of the PBS this is
advantageous because it allows upper and lower bounds to be placed on the computed
price incentives (control inputs) and the stations' fill levels (system state).

Let the system state at time step $t$ (i.e.\ a vector containing the fill levels
of all stations) be denoted by $x(t)$. The future state evolves as some function
of the current state and the control inputs $u(t)$ (i.e.\ a vector of the price
incentives offered to the users), so that the subsequent state is given by
$x(t+1) = f_t(x(t), u(t))$. Here $f_t$ represents only a simplified model of the
actual system dynamics, which is subject to model uncertainty and disturbances.
Note that the function $f_t$ depends on the predicted customer interaction and
is therefore time-varying (recall that customer interaction shows time-varying
patterns). The actual deviations from the predicted customer interaction is
uncertain, and thus considered as a disturbance to our model. Moreover, given
that the truck routes are known (from Section \ref{sec:trucks}), the functions
$f_t$ also contain changes to the system state affected by manual repositioning.

Assume the current time to be $t=0$ without loss of generality. Now we wish to
choose a \emph{finite series} of inputs $u(t)$ where $t=0,\ldots,T-1$ so that
the system behaves optimally over the finite time horizon $T$, starting from the
measured current state $x(0)$. This is done by solving an optimization problem,
trading off the perceived cost of having suboptimal system states $c^x_t(x(t))$, and
the cost of applying the control input $c^u_t(u(t))$:
\begin{equation}
\label{eq:mpc1}
  \min_{u(0),\ldots,u(T-1)}\quad \sum_{t = 1}^{T} c^x_t(x(t)) + \sum_{t = 0}^{T-1} c^u_t(u(t))
\end{equation}
such that
\begin{IEEEeqnarray}{rCl'l}
x(t+1)&=& f_t(x(t), u(t)), & t = 0,\ldots,T-1\IEEEyessubnumber\\
u(t)&\in& \mathcal{U}_t, & t = 0,\ldots,T-1\IEEEyessubnumber\\
x(t)&\in&\mathcal{X}_t, & t = 1,\ldots,T\IEEEyessubnumber
\end{IEEEeqnarray}
where the functions $c^x_t$ and $c^u_t$ are called \emph{stage costs} for the
state and input respectively, and sets $\mathcal{X}_t$ and $\mathcal{U}_t$
represent any constraints that may be present on the state and input.

Although solving problem (\ref{eq:mpc1}) gives a series of inputs $u(t), t=0,
\ldots, T-1$, only $u(0)$ is applied to the system. In the next time step, a new
measurement of the current state is made, and a new series of planned control
inputs are determined by resolving the MPC problem in light of the new
information. For this reason, MPC is also known as ``Receding Horizon Optimal
Control''.

To make problem (\ref{eq:mpc1}) tractable, the system model must often be
simplified. In particular, non-linear dynamics $f_t(x(t),u(t))$ make the problem
non-convex. For many systems, though, good control performance can still be
achieved if the model is linearized.

\subsection{Simplified model of customer behavior}
\label{sec:simpl-model-cust}
We now derive an approximate model of customer behavior that can be used in the
context of MPC. Customer behavior means their response to price incentives, as
described in (\ref{eq:prob-incentives}) and therefore enters into the system
dynamics $f_t(x(t),u(t))$. However this response is nonlinear, which as
described in the preceding section leads to a non-convex MPC problem
(\ref{eq:mpc1}).

We first explain the origin of this nonlinearity. Consider two neighbor stations
$n', n'' \in N_s$ with equal distance to $s$, for which the incentives offered
from station $s$ are equal, $p_{s,n'} = p_{s,n''}$. If there are customers
equally willing to go to $n'$ or $n''$, an infinitesimally small increase in
$p_{s,n'}$ would cause all those customers to choose $n'$ if we assume they act
totally rationally. The customer reaction to incentives is thus discontinuous in
the prices, and the true behavior model (\ref{eq:prob-incentives}) will not lead to a tractable
optimization problem.

To formulate a tractable MPC problem we approximate $\pi(s,n,p_s)$ in a linear
fashion and choose a convenient set $N_s$ of $N$ nearby neighbors for each
station, so that $|N_s| = N$. The linearization $\bar \pi(s,n,p_s)$ is computed
using Algorithm \ref{alg:gen-lin-model}, which creates samples of customer
reactions to random incentive offers (to all neighboring stations) and a
performs a least-squares fit between observed behavior and the linear model. The
linear dependency on offered incentives, where customers reject station $s$ and
go to neighbor $n$ instead, is defined by vectors $\tilde{\pi}_{s,n}$ of size
$N$, for each $s\in S, n\in N_s$.
\begin{equation}
 \bar \pi(s,n,p_s) = \tilde \pi_{s,n}^\top p_s
\end{equation}
\begin{algorithm}
  \begin{algorithmic}[1]
    \Require $s \in S$,\\
    $N_s$, \Comment{$|N_s| = N$ nearest neighbours around station $s$}\\
    $\Call{taken\_incentive}{s,N_s,p}$, \Comment{Neighbor chosen by the customer as described in Section \ref{sec:cust-react-incent}}\\
    $p_{\max}$, \Comment{Maximum payout}\\
    $P \gg 0$, \Comment{Number of generated payout vectors (samples)}\\
    $C \gg 0$, \Comment{Number of customer behavior samples}\\
    $\Omega$ \Comment{Set of samples. Each sample is a tuple of two vectors: The
      offered payouts $p$ to the $N$ neighbours and the percentage of customers
      taking a certain incentive $\delta$.} \For{$i=1$ to $P$} \State $p
    \leftarrow p_{\max} \cdot \Call{rand}{N} \in [0, p_{\max}]^N$ \Comment{Payout vector}
    \State $e \leftarrow \{0\}^N$ \Comment{Initialize behavior count} \For{$c=1$ to $C$}
    \State $n' \leftarrow \Call{taken\_incentive}{s,N_s,p} \in\{\mathbb N^+,\emptyset\}$
    \State $e_{n'} \leftarrow e_{n'} + 1$
    \EndFor
    \State $\delta \leftarrow e/C$ \Comment{Fraction taking a certain incentive}
    \State $\Omega(i) \leftarrow (p, \delta)$
    \EndFor
    \For{$n=1$ to $N$}
    \State $\tilde{\pi}_{s,n} \leftarrow \underset{\pi}{\arg\,\min}
      \sum_{i\in\{1,\ldots, P\}}\left(\pi^\top \Omega(i)_{p} -
        \Omega(i)_{\delta,n}\right)^2$
    \EndFor
  \end{algorithmic}
  \caption{Fitting the linear customer behavior model}
  \label{alg:gen-lin-model}
\end{algorithm}
\subsection{Computing dynamic price incentives}
\label{sec:mpc}
In this subsection we formulate an MPC problem, the solution of which gives the
price incentives $p_s(t)$ that should be offered to customers for
$t=0,\ldots,T-1$, where $p_s(0)$ are the prices to be issued immediately, and
$p_s(1),\ldots,p_s(T-1)$ are prices planned for subsequent steps.

The number $f_{s}(t)$ of bikes present at station $s$ at time $t$ evolves
according to the original arrival rate $\lambda_s(t)$ and net change $\eta_s(t)$
described in Section \ref{sec:model}, along with a modification
$\gamma(s,\lambda_s(t), p_s(t))$ due to customers taking price incentives and
another, $\Delta f_s(t)$ due to trucks adding or taking away bikes from the
stations. $\tilde N_s$ denotes the set of stations having $s$ as one of their
nearest neighbors.
\begin{equation}
\begin{aligned}
  \gamma(s,\lambda_s(t),p_s(t)) &= \sum_{\tilde n\in \tilde N_s} \pi\left(\tilde n,s,p_{\tilde n}(t)\right) \cdot \lambda_{\tilde n}(t)\\
&- \sum_{n\in N_s} \pi\left(s,n,p_s(t)\right) \cdot \lambda_s(t)
  \end{aligned}
\end{equation}
\begin{equation}
  f_s(t+1) = f_s(t) + \eta_s(t) + \gamma\left(s,\lambda(t), p(t)\right) + \Delta f_s(t).
  \label{eq:lin-model}
\end{equation}
Note that $\sum_{s\in S} \gamma(s,\lambda_s(t), p_s(t))= 0$ since the total number of bikes in
the system must be constant. Also, the controller assumes that customers who
take an incentive go from their originally intended destination to the new one
within the same time step. We do this to make the MPC problem easier to solve,
and assume it does not distort predictions of customer actions too much.

We now specify the components of MPC problem (\ref{eq:mpc1}). Using the
linearized model of customer reactions to incentives from Section
\ref{sec:simpl-model-cust}, and defining quadratic stage costs $c^x_t$ and
$c^u_t$, the MPC problem becomes a quadratic program. Under the assumption of a
linear customer response to prices, the expected payouts are a quadratic
function of the prices, and the input cost in the MPC problem represents a real
monetary cost to the system operator. The state cost aims to penalize loss of
customer service. The resulting MPC problem is a quadratic program (QP) and can
be stated as follows:
\begin{equation}
  \label{eq:mpc}
  \min_{p(t)} \sum_{t = 1}^{\Tprice} \sum_s^S Q_s(t)\tilde f_s(t)^2 + \sum_{t = 0}^{\Tprice-1} \sum_s^S R_s(t) p_s(t)^2
\end{equation}
such that
\begin{IEEEeqnarray}{rCl'l}
  \tilde f_s(t) &=& f_s(t) - \tfrac{1}{2}\left(\underline f^s_t + \overline f^s_t\right),&\forall s\in S, \forall t\IEEEyessubnumber\label{eq:mpc-plateau}\\
f_s(t+1) &=&f_s(t) + \eta_s(t)+ \Delta f_s(t)\nonumber\\
&+& \sum_{\tilde n \in \tilde N_s}\left(\tilde \pi_{\tilde n,s}^\top p_{\tilde n}(t)\right)\lambda_{\tilde n}(t)\nonumber\\
&-&\sum_{n\in N_s}\left(\tilde \pi_{s,n}^\top p_s(t)\right)\lambda_s(t),&\forall s\in S,\, \forall t\IEEEyessubnumber\label{eq:mpc-stateupdate}\\
&& \sum_{n\in N_s} \tilde \pi_{s,n}^\top p_s(t) \leq 1, &\forall s\in S, \, \forall t\IEEEyessubnumber\label{eq:mpc-100p}\\
&&  0\leq p_{s,n}(t) \leq p_\text{max}, &
  \begin{aligned}
    &\forall s\in S,\, \forall t,\\
&n\in \{1,\ldots, N\}
  \end{aligned}\IEEEeqnarraynumspace\IEEEyessubnumber
\label{eq:mpc-maxpayout}
\end{IEEEeqnarray}
The cost weights $Q_s(t)$ and $R_s(t)$ in the cost function (\ref{eq:mpc})
are used to penalize deviation $\tilde f(t)$ from the optimal state
$\frac{1}{2}(\underline f^s_t + \overline f^s_t)$, and the cost caused by the
incentives payout, respectively. A weighting factor $\alpha$ is used to adjust
the relative costs associated with having stations deviate from their optimal
point of operation in the middle of the utility plateau (which we assume to lead
to a lower service level), and cash payouts:
\begin{align}
  Q_{s}(t) &= 1/(\overline f^s_t - \underline f^s_t),\\
  R_{s}(t) &= \alpha \sum_{n\in N_s} \tilde \pi_{s,n}^\top \lambda_s(t).
\end{align}
A lower value of $\alpha$ leads to a lower relative penalty for paid incentives,
likely leading to higher price incentives applied.

Equation (\ref{eq:mpc-plateau}) transforms the number of bikes at each station
to a quantity measured relative to the ``best'' fill level, the middle of the
station's utility plateau. The predicted system states within the horizon are
defined by (\ref{eq:mpc-stateupdate}). It includes the expected arrival and
departure rates as well as the linearized model of customer behavior. Equation
(\ref{eq:mpc-100p}) limits the payouts such that no more than 100\% of arriving
customers take an incentive to one of the neighbours, and
(\ref{eq:mpc-maxpayout}) ensures that payouts are at most $p_\text{max}$.

\section{Simulation}
\label{sec:simulation}

\subsection{Simulation setting}
Based on the assumptions and the system model developed in
Section~\ref{sec:model}, a Monte-Carlo simulation is used to compare the two
approaches for bike repositioning. It is important to note that although
simplified models of the system are used to choose truck actions and prices, we
use the \emph{full} model derived from historical data as described in Section
\ref{sec:model} to simulate the actual behavior of customers. We simulate first
a sequence of weekdays, then a sequence of weekend days, bearing in mind that
demand patterns differ significantly between the two. Every simulation run
consists of a 24h burn-in period starting from the initial system configuration,
in order to reduce the dependence of our results on this initial configuration.
Then, three consecutive days are simulated, for which the statistics gathered
are presented below. In accordance to \cite{redistribution_2012}, redistribution
with trucks is performed during 8am--10pm.

\subsection{Simulation results}

The resulting service level is computed as follows:
\begin{equation}
\text{Service level} = \frac{\text{Potential customers} - \text{No-service events}}{\text{Potential customers}}
\end{equation}
When simulating three consecutive weekdays, about 49,800 potential customers are
generated on average, and for three consecutive weekend days (e.g.~a Bank
Holiday weekend) about 29,900. The number of total no-service events is the sum
of customers who could not rent a bike at an empty station and customers who
wanted to return their bike at a full station.

We varied number of trucks used for repositioning and the level of price
incentives given out (via the choice of state cost weight $\alpha$ in the price
controller). Figure~\ref{fig:servicelevel} shows how the service level reported
by the simulations varied as a result.

\begin{subfigures}
  \begin{figure}
    \begin{minipage}[h]{.475\linewidth}
    \centering
    \includegraphics[width=\textwidth]{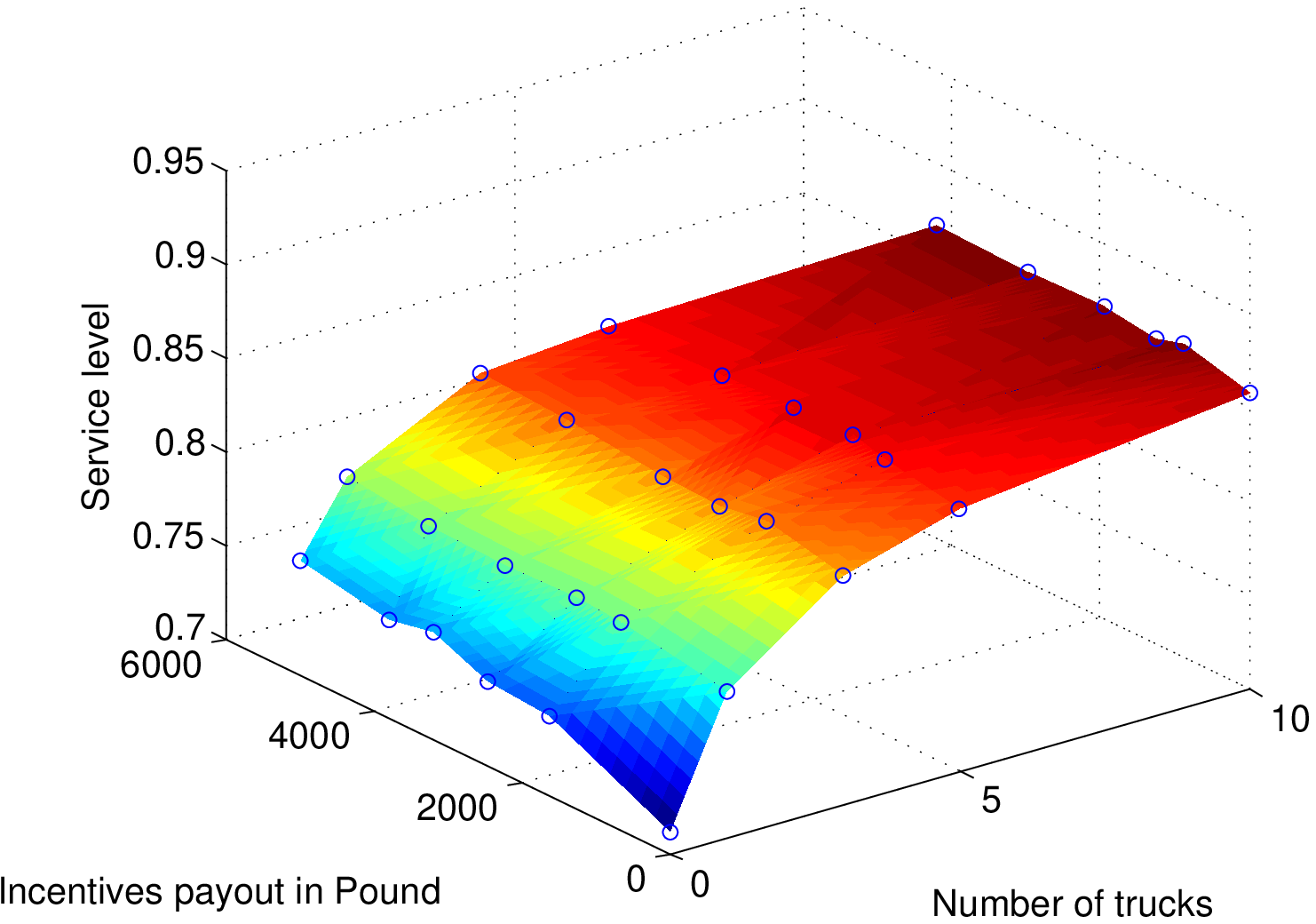}
    \caption{Service level for weekdays, as a function of number of trucks and
      total payouts.}
    \end{minipage}
    \hfill
    \begin{minipage}[h]{.475\linewidth}
    \centering
    \includegraphics[width=\textwidth]{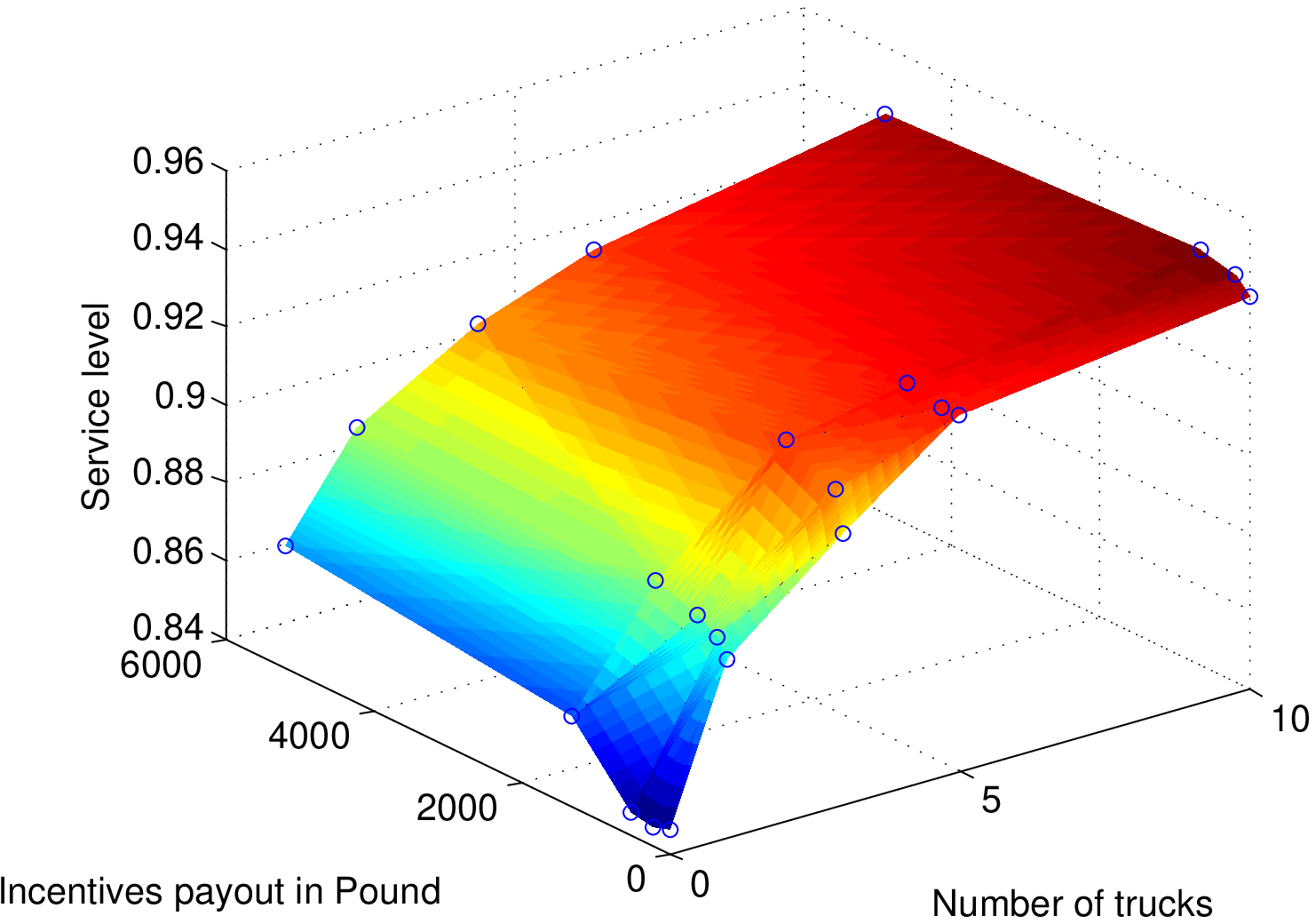}
    \caption{Service level for weekend days, as a function of number of trucks and
      total payouts.}
    \end{minipage}
  \label{fig:servicelevel}
  \end{figure}
\end{subfigures}

\begin{subfigures}
  \begin{figure}
    \begin{minipage}[h]{.475\linewidth}
    \centering
    \includegraphics[width=\textwidth]{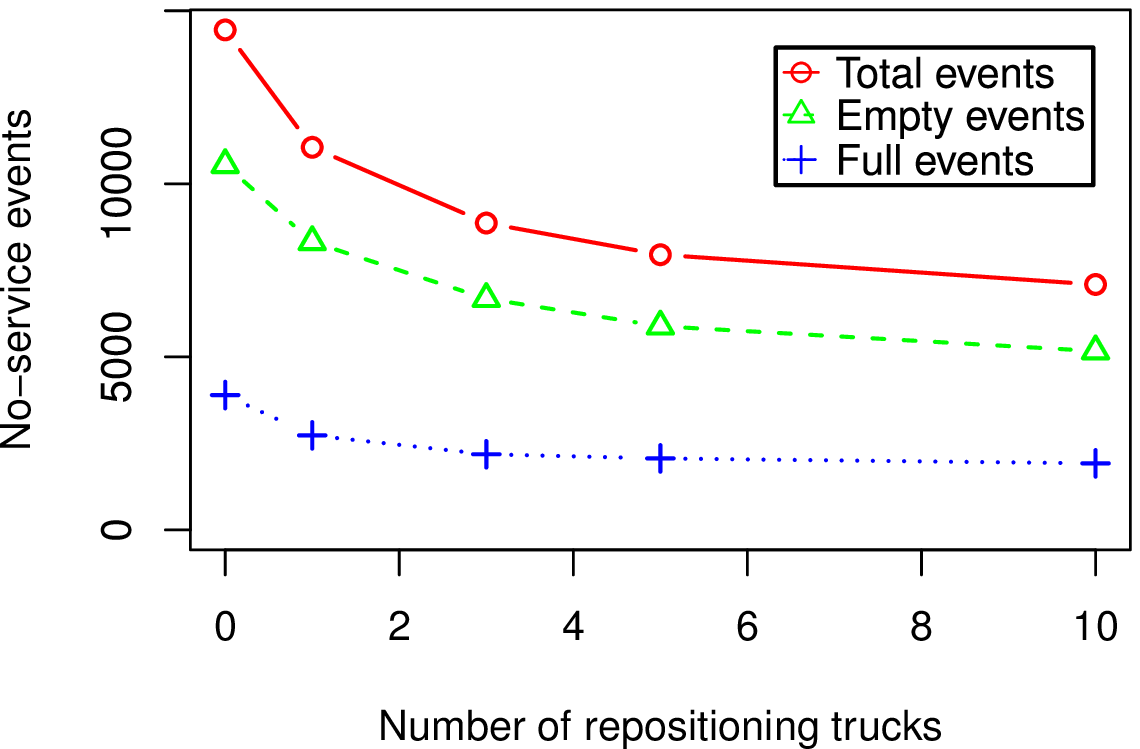}
    \caption{No-service events for different numbers of repositioning trucks (no
      incentives) for three consecutive weekdays}
    \end{minipage}
    \hfill
    \begin{minipage}[h]{.475\linewidth}
    \centering
    \includegraphics[width=\textwidth]{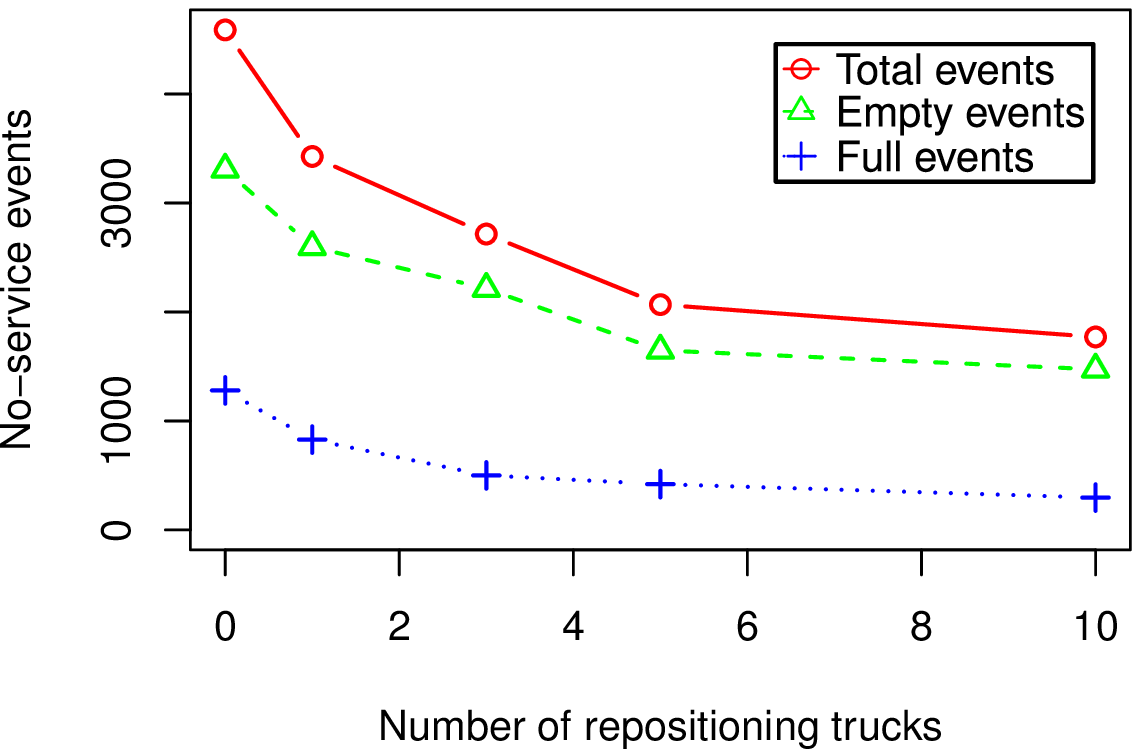}
    \caption{No-service events for different numbers of repositioning trucks (no
      incentives) for three consecutive weekend days}
    \end{minipage}
  \label{fig:trucks}
  \end{figure}
\end{subfigures}

\begin{subfigures}
  \begin{figure}
    \begin{minipage}[h]{.475\linewidth}
      \centering
      \includegraphics[width=\textwidth]{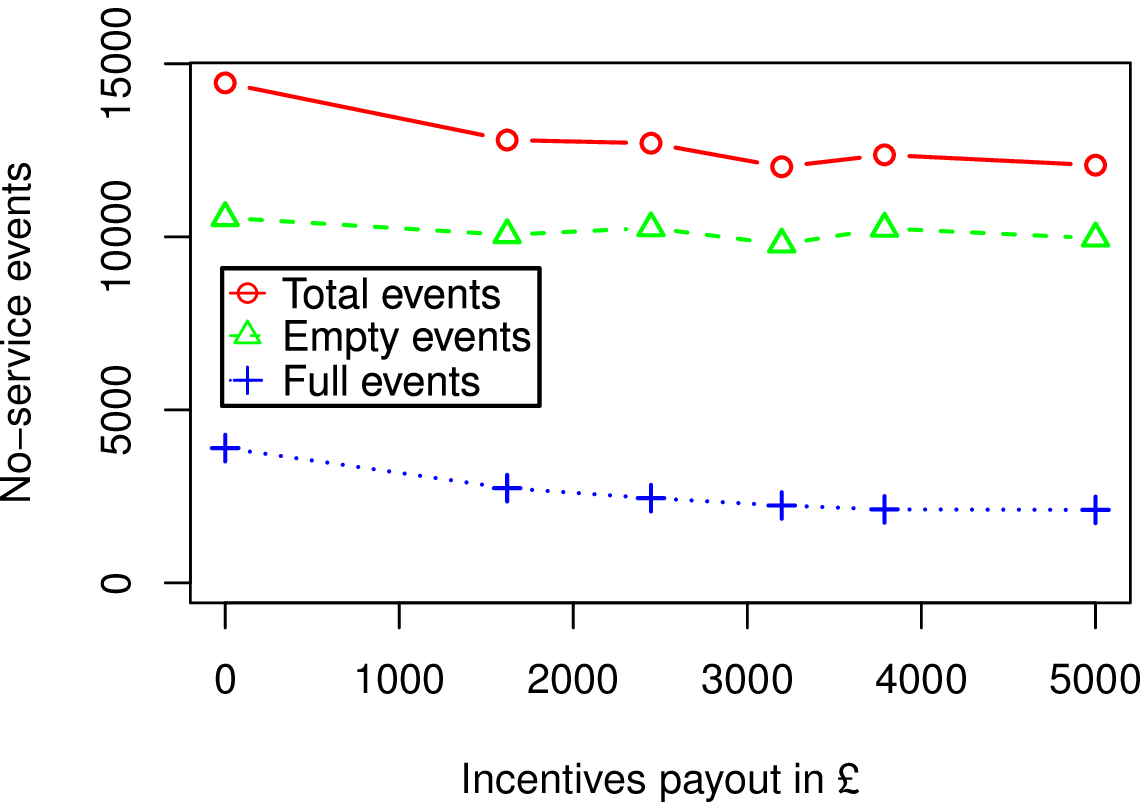}
      \caption{No-service events during simulations for weekdays with 0
        repositioning trucks. Over the course of 72h ca. 49,800 potential
        customers arrive.}
    \end{minipage}
    \hfill
    \begin{minipage}[h]{.475\linewidth}
      \centering
      \includegraphics[width=\textwidth]{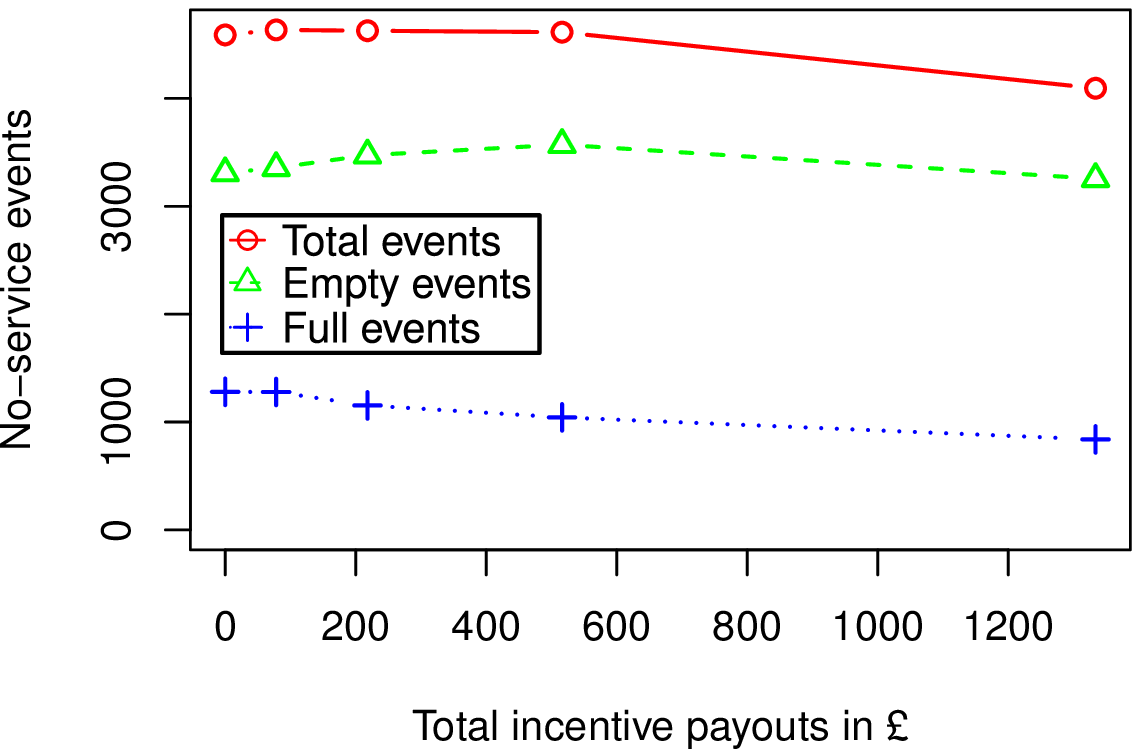}
      \caption{No-service events during simulations of weekend days with 0
        repositioning trucks. Over the course of 72h ca. 29,900 potential
        customers arrive.}
    \end{minipage}
    \label{fig:incentives}
  \end{figure}
\end{subfigures}

As expected, adding more trucks as well as paying out more in incentives has a
positive effect on the service level. However, with increasing service level,
adding trucks and incentives payouts becomes less efficient. Comparing the two
simulations, it appears that the usage peaks caused by commuters were
responsible for most of the service shortfalls observed. Most events where a
customer could not be served were concentrated on only a few
stations.

Figures~\ref{fig:trucks} and~\ref{fig:incentives} show the split of no-service
events into ``empty events'' (where customers wanting to rent a bike arrive at
an empty station) and ``full events'' (where customers wanting to return a bike
arrive at a full station). Since the number of full events is considerably lower
than the number of empty events, it seems plausible that adding more bikes could
have a positive effect on the service rate.

\section{Conclusions}
\label{sec:conclusion}

This paper considered how a Public Bicycle Sharing scheme could be managed using
a combination of intelligently routed repositioning trucks and redistribution
incentives for customers. The truck routes and price incentives were computed
using model-based receding horizon optimization principles, which took account
of expected future customer behavior. As the number of trucks was increased,
diminishing gains to service level were reported for added trucks and customer
incentive payouts. Customer payments were shown to be a means of reducing
service shortfalls, particularly when few repositioning trucks were in
operation.

Our results suggest that price incentives are viable for repositioning bicycles
in a PBS when the commuting rush hour is less prominent. For the London PBS,
price incentives alone were shown to be enough to keep the service level above
87\% on weekends without the use of staff. On weekdays however, when many
customers use the PBS to commute to work, price incentives alone are not
sufficient to lift the service level substantially.

The price control algorithm could be developed further in several ways. Firstly,
a field trial could be used to improve the accuracy of the customer decision
model upon which our controller is based. This would reveal the range of price
elasticities customers exhibit, and also indicate to what extent customer
responses to prices are irrational. Secondly, some simplifying assumptions (for
example deterministic customer arrival, linearized customer reaction to
incentives) could be replaced by more detailed models. However, it is unsure how
much performance can be gained here, as the prediction horizon for optimization
might have to be shortened considerably to account for the increase in
computational complexity.

\bibliographystyle{IEEEtran}      
\bibliography{literature}   

\end{document}